\documentclass{article}\usepackage[]{graphicx}\usepackage[]{color}
\makeatletter
\def\maxwidth{ %
  \ifdim\Gin@nat@width>\linewidth
    \linewidth
  \else
    \Gin@nat@width
  \fi
}
\makeatother

\definecolor{fgcolor}{rgb}{0.345, 0.345, 0.345}

\usepackage{framed}
\makeatletter
 {\par\unskip\endMakeFramed%
 \at@end@of@kframe}
\makeatother

\definecolor{shadecolor}{rgb}{.97, .97, .97}
\definecolor{messagecolor}{rgb}{0, 0, 0}
\definecolor{warningcolor}{rgb}{1, 0, 1}
\definecolor{errorcolor}{rgb}{1, 0, 0}
\newenvironment{knitrout}{}{} 

\usepackage{alltt}
\usepackage{url}
\usepackage{bbm}
\usepackage{natbib}
\usepackage{amsmath,amsthm,amssymb,enumerate,enumitem}
\usepackage{thmtools} 
\usepackage{multirow,array}
\usepackage{authblk}
\usepackage[a4paper, total={6in, 9in}]{geometry}


\newcommand{\E}{\mathbb{E}}

\newcommand{\R}{\mathbb{R}}
\renewcommand{\P}{\mathbb{P}}

\newcommand\independent{\protect\mathpalette{\protect\independenT}{\perp}}
\def\independenT#1#2{\mathrel{\rlap{$#1#2$}\mkern2mu{#1#2}}}
\setlist[1]{itemsep=-7pt}

\declaretheorem[style=theorem,numberwithin=section]{theorem}
\declaretheorem[style=definition,numberwithin=section]{definition}
\declaretheorem[style=definition,numberwithin=section]{procedure}

\title{Faster Family-wise Error Control for Neuroimaging with a Parametric Bootstrap}

\author[1]{Simon N. Vandekar}
\author[2]{Theodore D. Satterthwaite}
\author[2]{Adon Rosen}
\author[2]{Rastko Ciric}
\author[2]{David R. Roalf}
\author[2]{Kosha Ruparel}
\author[2,3]{Ruben C. Gur}
\author[2,3]{Raquel E. Gur}
\author[1]{Russell T. Shinohara}
\affil[1]{Department of Biostatistics, Epidemiology, and Informatics, University of Pennsylvania, Philadelphia PA}
\affil[2]{Department of Psychiatry, University of Pennsylvania, Philadelphia PA}
\affil[3]{Department of Radiology, University of Pennsylvania, Philadelphia PA}
\affil[ ]{{\tt  simonv@mail.med.upenn.edu; rshi@mail.med.upenn.edu} }
\date{}                     
\setcounter{Maxaffil}{0}

\IfFileExists{upquote.sty}{\usepackage{upquote}}{}
\begin{document}



\maketitle



\begin{abstract}
{In neuroimaging, hundreds to hundreds of thousands of tests are performed across a set of brain regions or all locations in an image.
Recent studies have shown that the most common family-wise error (FWE) controlling procedures in imaging, which rely on classical mathematical inequalities or Gaussian random field theory, yield FWE rates that are far from the nominal level.
Depending on the approach used, the FWER can be exceedingly small or grossly inflated.
Given the widespread use of neuroimaging as a tool for understanding neurological and psychiatric disorders, it is imperative that reliable multiple testing procedures are available.
To our knowledge, only permutation joint testing procedures have been shown to reliably control the FWER at the nominal level.
However, these procedures are computationally intensive due to the increasingly available large sample sizes and dimensionality of the images, and analyses can take days to complete.
Here, we develop a parametric bootstrap joint testing procedure. The parametric bootstrap procedure works directly with the test statistics, which leads to much faster estimation of adjusted \emph{p}-values than resampling-based procedures while reliably controlling the FWER in sample sizes available in many neuroimaging studies.
We demonstrate that the procedure controls the FWER in finite samples using simulations, and present region- and voxel-wise analyses to test for sex differences in developmental trajectories of cerebral blood flow.}
{Hypothesis testing; FWE control; Neuroimaging}
\end{abstract}

\section{Introduction}

 Magnetic resonance imaging (MRI) is a widely used tool for studying the neurological correlates of human cognition, psychiatric disorders, and neurological diseases.
 This is due to the flexibility of MRI to noninvasively study various functional and physiological properties of the human brain.
 Often, many hypothesis tests are performed at every voxel or at anatomically defined brain regions in order to identify locations that are associated with a cognitive or diagnostic variable.
 Multiple testing procedures (MTPs) are crucial for controlling the number of false positive findings within a statistical parametric map or across a set of brain regions being investigated.
Typically the family-wise error rate (FWER) is controlled at a level $0<\alpha<1$, meaning that the probability one or more null hypotheses is falsely rejected is less than or equal to $\alpha$.

Recently, several studies have demonstrated that commonly used FWER controlling procedures yield incorrect false positive rates \citep{eklund_cluster_2016,eklund_does_2012,silver_false_2011}.
Cluster-based spatial inference procedures \citep{friston_statistical_1994} that rely on Gaussian random field (GRF) theory can have hugely inflated false positive rates, while voxel-wise GRF MTPs  \citep{friston_assessing_1994} tend to have exceedingly small FWERs that are far below the nominal level.
The failure of GRF procedures is due to the fact that the spatial assumptions of Gaussian random field approaches are often violated in neuroimaging data sets \citep{eklund_cluster_2016}.
The small type 1 error rate of voxel-wise procedures is due to the reliance on classical FWER procedures (such as the Bonferroni procedure) that do not account for the strong dependence between hypothesis tests in voxel-wise and region-wise analyses.
This small type 1 error rate leads to an inflated type 2 error rate and loss of power.
These recent studies demonstrate a dire need for robust and powerful inference procedures.

To our knowledge, the only methods used in neuroimaging that reliably control the FWER are permutation-based joint testing procedures \citep{winkler_permutation_2014,eklund_cluster_2016,dudoit_multiple_2008}.
These methods maintain the nominal FWER because they appropriately reproduce the joint distribution of the imaging data, thereby overcoming the limitations of methods typically used in imaging.
Unfortunately, permutation testing is computationally intensive, especially in modern imaging data sets that have large sample sizes and hundreds of thousands of voxels or hundreds of brain regions.
Moreover, currently available neuroimaging software only performs single-step testing procedures, although step-down procedures are uniformly more powerful.
The extensive computation time means it can take days to perform statistical tests, or can even lead investigators to reduce the number of permutations to an extent that adjusted \emph{p}-values have large error.

As a solution, we design a parametric-bootstrap joint (PBJ) testing procedure for hypothesis testing in region- and voxel-wise analyses.
Region-wise analyses are performed by averaging all voxel values within anatomically defined regions and fitting a model at each region.
Voxel-wise analyses fit a model at each of hundreds of thousands of voxels in a brain image.
As the parametric bootstrap does not require resampling and refitting the model for every iteration, it is faster than permutation testing procedures.
In addition, our procedure allows the generated null distribution to be applied to multiple tests of statistical parameters.
This drastically reduces computing time as the null distribution can be estimated from one bootstrap procedure and applied for many tests.
We demonstrate the efficacy of our procedure by investigating sex differences in development-related changes of cerebral blood flow (CBF) measured using arterial spin labeled MRI \citep{satterthwaite_impact_2014}.

All joint testing procedures rely on an estimate of the joint null distribution of the test statistics.
Therefor, all joint testing procedures produce approximate \emph{p}-values that only guarantee asymptotic FWE control as the sample size goes to infinity.
However, permutation procedures are widely used in genetics and neuroimaging when the number of tests grossly exceeds the sample size \citep{westfall_multiple_2008,winkler_permutation_2014}.
A recent neuroimaging simulation study demonstrated that permutation tests control the FWER at the nominal level even when the number of tests exceeds the sample size \citep{eklund_cluster_2016}.
To investigate this feature further, we perform a simulation study to investigate when joint MTPs control the FWER.

In Section \ref{sec:mtps} we discuss several FWER controlling procedures used in neuroimaging and classify them with regard to single-step/step-down and marginal/ joint procedures.
In Section \ref{sec:jointdist} we present the new PBJ procedure.
We summarize the data set and simulation methods in Section \ref{sec:methods}.
In Section \ref{sec:Simulations} we use simulations to investigate when joint MTPs maintain the nominal type 1 error rate, and we compare the power and FWER of the PBJ to commonly used MTPs using simulations of region- and voxel-wise data analyses.
Finally, in Section \ref{sec:results} we perform region- and voxel-wise analyses of the CBF data.
 

\section{Overview of Multiple Testing Procedures}
\label{sec:mtps}

Throughout, we will assume the image intensity for $n$ subjects, $Y_v \in \R^n$, for voxels or regions, $v =1 \ldots, V$, can be expressed as the linear model
\begin{equation}\label{eq:model}
Y_v = X_0 \alpha_v + X_1 \beta_v + \epsilon_v = X \zeta_v + \epsilon_v,
\end{equation}
where $X_0$ is an $n \times m_0$ matrix of nuisance covariates, $X_1$ is an $n \times m_1$ matrix of variables to be tested, $m=m_0 + m_1$, $X = [ X_0, X_1]$, parameters $\alpha_v \in \R^{m_0}$, $\beta_v \in \R^{m_1}$, and $\zeta_v = [\alpha_v^T, \beta_v^T ]^T$.
Let $Y = [Y_1, \ldots, Y_V]$ and let $Y_i$ denote an arbitrary row vector of $Y$.
Assume that the $V\times V$ covariance matrix is the same for each subject, $\text{cov}(Y_i) = \Psi$, and define the correlation matrix 
\begin{equation}
\label{eq:sigma}
\Sigma_{j,k} = \Psi_{j,k}/\sqrt{\Psi_{j,j} \Psi_{k,k}}.
\end{equation}
We denote the observed test statistics by $Z_{v0}$ for $v=1,\ldots, V$, where we reject the null $H_0:\beta_v = 0$ for large values of $Z_{v0}$.
The notation $Z$ is used (as opposed to $F$) as we consider transformed F-statistics in Section \ref{sec:jointdist}.

At each location we are interested in performing the test of the null hypothesis 
\[
H_{0v}: \beta_v = 0
\]
using an F-statistic.
The form of model \eqref{eq:model} covers a wide-range of possible tests including tests of group differences, continuous covariates, analysis of variance, and interactions.
$V$ is typically in the hundreds for region-wise analyses or the hundreds of thousands for voxel-wise analyses.

The goal of all multiple testing procedures is to control some measure of the number of false positive findings in a family of hypothesis tests.
We will assume the approach of controlling the FWER at some level $0<\alpha <1$.
In most fields we would like to maintain control of the FWER even in the case that there are false null hypotheses. 
This is referred to as strong control of the FWER \citep{hochberg_sharper_1988}.
\begin{definition}
Let $\{ H_1, \ldots, H_{V} \} = H$ denote a set of hypotheses.
A correction procedure has $\alpha$ level {\it strong control of the FWER} if for all $H' \subset H$ of true null hypotheses
\begin{equation}\label{eq:type1error}
\P(\text{retain $H_v$ for all $H_v\in H'$}) \ge 1 - \alpha.
\end{equation}
\end{definition}
In neuroimaging, strong control of the FWER corresponds to maintaining the correct FWER control even if there is a set of regions or voxels where there is a true effect.
The strong FWER controlling procedures discussed in this manuscript are given in Table \ref{tab:methods}.

\begin{table}
\centering
\begin{tabular}{r|llll}
Procedure & Analysis & Marg/Joint & Null estimation & Step proc \\
\hline
Bonferroni & Region-wise & Marg & Theor & Single-step  \\
Holm  & Region-wise& Marg & Theor & Step-down   \\
PBJ  & Region-wise& Joint  & Theor; Boot & Single-step/Step-down  \\
Permutation & Region-wise & Joint & Boot & Single-step/Step-down  \\
\hline
Bonferroni & Voxel-wise & Marg & Theor & Single-step \\
Holm & Voxel-wise & Marg & Theor & Step-down  \\
PBJ & Voxel-wise & Joint & Theor; Boot & Single-step   \\
Permutation & Voxel-wise & Joint & Perm & Single-step   \\
\end{tabular}
\caption{A summary of hypothesis correction procedures for neuroimaging. Joint methods are more powerful than marginal methods. Step-down procedures are more powerful than single-step. Marg$=$ marginal; Step proc$= $ step procedure; Speed$=$ computing speed; Theor$=$ Assumes theoretical distribution; Perm$=$ distribution obtain using permutations.}
\label{tab:methods}
\end{table}

\subsection{Single-step and Step-down procedures}
Due to the complex dependence structure of the test statistics in neuroimaging data it is necessary to use testing procedures that are appropriate for any type of dependence amongst tests.
Single-step and step-down procedures are two classes of MTPs that have strong control of the FWER for any dependence structure \citep{dudoit_multiple_2003}.
These are in contrast to step-up procedures, which make explicit assumptions about the dependence structure of the test statistics \citep{sarkar_simes_1997}.
When testing $V$ hypotheses $H = \{H_1, \ldots, H_V\}$ in a family of tests, the single-step procedures use a more stringent common threshold $\alpha^* \le \alpha$ such that the inequality \eqref{eq:type1error} is guaranteed.
While this procedure is simple, in most cases it is uniformly more powerful to use a step-down procedure.

\begin{procedure}[Step-down procedure]
\label{proc:stepdown}
Let $p_{(1)}, \ldots, p_{(V)}$ be the increasingly ordered \emph{p}-values, for hypotheses $H_{(1)}, \ldots, H_{(V)}$.
The step-down procedure uses thresholds $\alpha^*_{(1)} \le \ldots \le \alpha^*_{(V)} \le \alpha$ to find the largest value of $K$ such that
\begin{align*}
p_{(k)} < \alpha^*_{(k)} & \text{ for all $k\le K$},
\end{align*}
 and rejects all hypotheses $H_{(1)}, \ldots H_{(K)}$.
\end{procedure}
 
 The single-step procedure can usually be improved by modifying it to be step-down procedure while still maintaining strong control of the FWER \citep{holm_simple_1979,marcus_closed_1976,hommel_stagewise_1988}.
 The canonical example of this modification involves the Bonferroni and Holm procedures \citep{dunn_multiple_1961,holm_simple_1979}.
 The Bonferroni procedure uses the common threshold $\alpha^* = \alpha/V$ for all hypotheses, and rejects all $H_{k}$ such that $p_k < \alpha/V$.
 The Holm procedure instead uses the thresholds $\alpha^*_{(k)} = \alpha/(V +1 - k)$ and rejects using Procedure \ref{proc:stepdown}.
 Holm's procedure is always more powerful than Bonferroni and still controls the FWER strongly \citep{holm_simple_1979}.


\subsection{Joint testing procedures}

Multiple testing procedures can further be classified into marginal and joint testing procedures.
The Bonferroni and Holm approaches are called marginal procedures because they do not make use of the dependence of the test statistics.
As they must be able to control any dependence structure they are more conservative than joint testing procedures that cater exactly to the distribution of the test statistics \citep{dudoit_multiple_2008}.
The benefit of accounting for the dependence of test statistics is critical in neuroimaging, where the test statistics are highly dependent due to spatial, anatomical, and functional dependence.
The joint MTPs differ in how the null distribution is estimated from the sample, but use the same procedure to compute single-step or step-down adjusted \emph{p}-values.
For this reason we will first discuss the estimation of the null distribution and then discuss how adjusted \emph{p}-values are computed from the estimate of the null.

\subsubsection{Estimating the null distribution with permutations}
The ``Randomise" procedure proposed by \citet{winkler_permutation_2014} is a single-step permutation joint (PJ) MTP widely used in neuroimaging.
The PJ MTP procedure is a modification of the Freedman-Lane procedure \citep{freedman_nonstochastic_1983} implemented by \citet{winkler_permutation_2014} that estimates the null distribution of the test statistics by permuting the residuals of the reduced model to obtain estimates of the parameters under the null hypothesis that there is no association between the variables in $X_1$ and the outcome.
Though only the single-step procedure has been proposed for use in neuroimaging, for completeness we include null estimation of the test statistics for the step-down procedure.
\begin{procedure}[Permutation Null Estimation]
Assuming the model \eqref{eq:model}:
\label{proc:permutation}
 \begin{enumerate}
 \item Regress $Y_v$ against the reduced model $Y_v = X_0 \alpha_v + \epsilon_v$ to obtain the residuals $\hat \epsilon_{v}$ and the test statistics $Z_{v0}$ for all regions or voxels $v = 1, \ldots, V$.
 \item Order the test statistics $Z_{(1)0} < Z_{(2)0} < \ldots Z_{(V)0}$ and let $\epsilon_{(v)}$ be the corresponding residuals.
 \item For $b=1, \ldots, B$, randomly generate a permutation matrix $P_b$, permute the residuals $\hat \epsilon_{(v)b} = P_b\hat\epsilon_{(v)}$, and define the permuted data at each voxel as $Y_{(v)b} = \hat \epsilon_{(v)b}$.
 \item For $v=1, \ldots, V$ and $b=1, \ldots, B$ regress $Y_{(v)b}$ onto the full model \eqref{eq:model} to obtain the test statistic $Z_{(v)b}$ to be used as an estimate of the null distribution.
 \end{enumerate}
\end{procedure}
Ordering the test statistics is not necessary for the single-step procedure, but is required for computing step-down adjusted \emph{p}-values.
Note that for any given $b$ the generated test statistics $Z_{(v)b}$ may not be increasing in $v$.
Strong control of the FWER for this permutation procedure relies on the assumption of subset pivotality (see Supplement) \citep{westfall_resampling-based_1993}.

This null distribution can be used to compute rejection regions or adjusted \emph{p}-values.
Because all joint testing procedures rely on estimates of the null distribution they are approximate in finite samples.
These procedures only guarantee asymptotic control of the FWER as $n\to\infty$.
The permutation methodology and our proposed PBJ procedure (Section \ref{sec:jointdist}) only differ in how the null distribution is estimated.

\subsubsection{Computing adjusted p-values}

The following procedures describe how to obtain single-step and step-down adjusted \emph{p}-values using any estimate of the null distribution generated by permutation or bootstrapping.

\begin{procedure}[Single-Step Joint Adjusted \emph{p}-values]
\label{proc:bootSS}
Assuming the model \eqref{eq:model} and an empirical distribution of null statistics $Z_{(v)b}$ for $v=1, \ldots, V$ and $b=1, \ldots, B$:
 \begin{enumerate}
 \item Compute $Z_{\text{max},b} = \max_{v\le V} Z_{(v)b}$.
 \item Compute the voxel-wise corrected \emph{p}-value as $\tilde p_{v} = 1/B \sum_{b=1}^b I(Z_{\text{max},b}\ge Z_{v0})$, where $I(\cdot)$ is the indicator function.
 \end{enumerate}
\end{procedure}

\begin{procedure}[Step-down Joint Adjusted \emph{p}-values]
\label{proc:bootSD}
Assuming the model \eqref{eq:model} and an empirical distribution of null statistics $Z_{(v)b}$ for $v=1, \ldots, V$ and $b=1, \ldots, B$:
 \begin{enumerate}
 \item Compute the statistics $Z_{\max v,b} = \max_{k\le v} Z_{(k)b}$.
 \item Compute the the intermediate value $ p^*_{(v)} = 1/B \sum_{b=1}^B I(Z_{\max v,b}\ge Z_{(v)0})$.
 \item The adjusted \emph{p}-values are $\tilde p_{(v)} = \max_{k\le v} p^*_{(k)}$.
 \end{enumerate}
\end{procedure}

The single-step procedure is less powerful than the step-down counterpart \citep{dudoit_multiple_2008}.
This is evident in comparing the procedures, as the adjusted \emph{p}-values for the step-down procedure are at least as small as the adjusted \emph{p}-values from the single-step.
The adjusted \emph{p}-values obtained using the step-down approach correspond to using Procedure \ref{proc:stepdown}.
They key feature of Procedure \ref{proc:bootSS} is that the estimate of the joint distribution is used to compute adjusted \emph{p}-values, where as Holm's procedure is a version of Procedure \ref{proc:stepdown} that only uses the marginal distribution of the test statistics. Up to this point we have described existing MTPs used in neuroimaging.


\section{Parametric-Bootstrap}
\label{sec:jointdist}

In this section we propose single-step and step-down PBJ approaches that are conceptually identical to the PJ procedure, but differ in how the null distribution of the statistics is generated.
The PBJ is based on the theory developed by \citet{dudoit_multiple_2008} and therefore does not rely on the assumption of subset pivotality.
We will allow the additional assumption that under the null the test statistics are approximately chi-squared.
The chi-squared approximation can rely on asymptotic results, or as we show in Section \ref{sec:CBFmethods}, a transformation can be used so that the test statistics are approximately chi-squared.
As with the PJ procedure discussed above this implies that the \emph{p}-values are approximations that become more accurate as $n \to \infty$.
In Section \ref{sec:Simulations} we use simulations to show that the procedures control the FWER in sample sizes available in many neuroimaging studies.

\subsection{Asymptotic control of the FWER}

Here, we give a brief overview of the underlying assumptions sufficient to prove that the adjusted \emph{p}-values from the PBJ control the FWER asymptotically.
Details are given in the Supplement.
We require that the test statistics' null distribution satisfies the null domination condition \citep[p.~203]{dudoit_multiple_2008} and need a consistent estimate of the null distribution.

\begin{definition}[Asymptotic null domination]
\label{def:and}
Let $H_0$ denote the indices of $M$ true null hypotheses in the set of $V$ hypotheses $H = \{ H_1, \ldots, H_V\}$, with corresponding test statistics $Z_{1n}, \ldots, Z_{Vn}$.
The $V$-dimensional null distribution $Q_0$ satisfies the {\it asymptotic null domination} condition if for all $x \in \R$
\[
\limsup_{n\to \infty} \P\left( \max_{m \in H_0} Z_{mn} > x  \right)
\le \P\left(\max_{m\in H_0} Z_{m} > x \right),
\]
where $Z_n \sim Q_n$ is distributed according to a finite sample null joint distribution $Q_n$ and $Z \sim Q_0$.
\end{definition}

The joint null distribution $Q_0$ for the test statistics can be used to compute asymptotically accurate adjusted \emph{p}-values if the null domination condition holds.
We use a diagonal singular Wishart distribution as the null because it is proportional to the asymptotic distribution a vector of F-statistics. We also transform the F-statistics marginally to chi-squared statistics if
\begin{equation}
\label{eq:normality}
\epsilon_v \sim \mathcal{N}(0, \sigma_v I),
\end{equation}
for the error term in model \eqref{eq:model}.
The assumption of asymptotic null domination of Definition \ref{def:and} is satisfied when using our transformated F-statistics even if the error distribution is not normal (See Theorem \ref{thm:nulldom} in the Supplement).
Thus, the PBJ provides approximate \emph{p}-values regardless of the error distribution.

In practice the joint distribution of the test statistics, $Q_0$, is not known {\it a priori} and must be estimated from the data.
In order to obtain asymptotically valid adjusted \emph{p}-values, the estimate for $Q_0$, $\hat Q_0$, must be consistent.
Because the distribution $Q_0 = Q_0(\Sigma)$ is a function of the covariance matrix $\Sigma$, a consistent estimator for $Q_0$ can be obtained from a consistent estimator for $\Sigma$.
The choice of the estimator is critical because $\hat \Sigma$ must be consistent for $\Sigma$ under the alternative distribution. That is, even if some null hypotheses are false $\hat \Sigma$ must be consistent.
The PBJ procedure uses a consistent estimator for $\Sigma$ based on the residuals of the full design $X$.
The consistency of $\hat \Sigma$ under the alternative guarantees asymptotic control of the FWER (see supplementary material).
Note that, in general, the PJ procedure may not yield a consistent estimator for the joint distribution $Q_0$ if the alternative is true at more than one location because the estimates of covariances are biased.
The covariance estimates are biased due to the fact that, for the reduced model used by the PJ MTP (see Procedure \ref{proc:permutation}), the mean is incorrectly specified in locations where the alternative is true. 
If the assumption of subset pivotality is satisfied, then the permutation estimator will be consistent.

\subsection{Parametric bootstrap null distribution}

For the parametric bootstrap we assume model \eqref{eq:model} and use F-statistics for the test $H_{0v}: \beta_v = 0$.

\begin{equation}\label{eq:Fstatistic}
F_{vn} = \frac{(n-m)Y_v^T(R_{X_0} - R_{X})Y_v}{m_1Y_v^TR_XY_v},
\end{equation}
where $R_A$ denotes the residual forming matrix for $A$.
When the errors are normally distributed \eqref{eq:normality}, $F_{vn}$ is an F-distributed random variable with $m_1$ and $n-m$ degrees of freedom.
When the errors are not normal the statistics \eqref{eq:Fstatistic} are asymptotically $m_1^{-1}\chi^2_{m_1}$.
The following theorem gives the asymptotic joint distribution of the statistics.

\begin{theorem}\label{thm:JointFdistribution}
Assume model \eqref{eq:model}, let $F_{vn}$ be as defined in \eqref{eq:Fstatistic}, and define the $p \times V$ matrix $\alpha = [ \alpha_1, \ldots, \alpha_V]$.
Further assume that, under the null, 
\begin{align}
R_{X_0} \E Y & = \E Y - X_0\alpha = 0 \label{eq:firstmoment}\\
\lVert \Psi \rVert_M & < \infty \label{eq:secondmoment},
\end{align}
where $\lVert \Psi \rVert_M = \sup_{x} \lVert \Psi x \rVert/\lVert x \rVert$ is the induced norm.

Then the following hold:
\begin{enumerate}
\item Define the matrix $\Phi_{i,i} = 1/\sqrt{\Psi_{i,i}}$ and $\Phi_{i,j} = 0$ for $i\ne j$. When \eqref{eq:normality} holds,
\begin{equation} \label{eq:wisharts}
\begin{array}{rl}
\Phi Y^T (R_{X_0} - R_{X}) Y \Phi \sim \mathcal{W}_V(m_1, \Sigma)\\
\Phi Y^T R_X Y \Phi  \sim \mathcal{W}_V(n-m, \Sigma)
\end{array}
\end{equation} where $\mathcal{W}_p(d, \Sigma)$ denotes a singular Wishart distribution with degrees of freedom $d<p$ and matrix $\Sigma \in \R^{p\times p}$ \citep{srivastava_singular_2003}.
\item The F-statistics converge in law to the diagonal of a singular Wishart distribution, that is,
\[
m_1 F_{n} = m_1[F_{1n}, \ldots, F_{Vn} ] \rightarrow_{L} \mathrm{diag}\left\{ \mathcal{W}_V(m_1, \Sigma) \right\}.
\]
\end{enumerate}
 as $n \to \infty$.
\end{theorem}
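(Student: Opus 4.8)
The plan is to reduce both quadratic forms appearing in \eqref{eq:Fstatistic} to quadratic forms in the error matrix $\epsilon$, and then invoke (singular) Wishart theory for Part~1 and a multivariate central limit theorem for Part~2. Since $X_0$ is a submatrix of $X$, the matrix $R_{X_0}-R_X = P_X-P_{X_0}$ is the orthogonal projection onto the portion of $\mathrm{col}(X)$ orthogonal to $\mathrm{col}(X_0)$: it is symmetric, idempotent, of rank $m_1$, and kills $X_0$, while $R_X$ is symmetric, idempotent, of rank $n-m$, and also kills $X_0$. Hence under the null, where \eqref{eq:firstmoment} gives $Y = X_0\alpha + \epsilon$,
\[
Y^T(R_{X_0}-R_X)Y = \epsilon^T(P_X-P_{X_0})\epsilon,\qquad Y^TR_XY = \epsilon^TR_X\epsilon .
\]

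For Part~1, the normality assumption \eqref{eq:normality} together with $\Cov(Y_i)=\Psi$ and independence across subjects forces the rows of $\epsilon$ to be i.i.d.\ $\mathcal{N}_V(0,\Psi)$ (in particular $\sigma_v=\Psi_{v,v}$). For any fixed symmetric idempotent $P$ of rank $r$, the classical Cochran/Wishart argument --- valid for $r<V$ by the singular Wishart theory of \citet{srivastava_singular_2003} --- gives $\epsilon^TP\epsilon\sim\mathcal{W}_V(r,\Psi)$; applying this with $r=m_1$ and $r=n-m$ and conjugating by the deterministic diagonal matrix $\Phi$, we have $\Phi\epsilon^TP\epsilon\Phi=(\epsilon\Phi)^TP(\epsilon\Phi)$, whose rows are i.i.d.\ $\mathcal{N}_V(0,\Phi\Psi\Phi)$ with $(\Phi\Psi\Phi)_{j,k}=\Psi_{j,k}/\sqrt{\Psi_{j,j}\Psi_{k,k}}=\Sigma_{j,k}$ by \eqref{eq:sigma}. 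Thus $\Phi\epsilon^TP\epsilon\Phi\sim\mathcal{W}_V(r,\Sigma)$, which is \eqref{eq:wisharts}.

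For Part~2 we drop normality and keep only \eqref{eq:firstmoment}--\eqref{eq:secondmoment} together with independence across subjects. Writing $P_X-P_{X_0}=UU^T$ with $U\in\R^{n\times m_1}$ having orthonormal columns, we get $Y^T(R_{X_0}-R_X)Y=G^TG$ where $G=U^T\epsilon\in\R^{m_1\times V}$; its $j$th row is $\sum_{i=1}^n U_{i,j}\epsilon_i$, a weighted sum of independent mean-zero $\R^V$-vectors with $\sum_i U_{i,j}^2=1$ and $\sum_i U_{i,j}U_{i,k}=0$ for $j\ne k$. Under \eqref{eq:secondmoment} and a Lindeberg-type negligibility condition on the design, the multivariate Lindeberg CLT shows $G$ converges in law to a matrix with i.i.d.\ $\mathcal{N}_V(0,\Psi)$ rows, so by the continuous mapping theorem $Y^T(R_{X_0}-R_X)Y\to_{L} W$ with $W\sim\mathcal{W}_V(m_1,\Psi)$. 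Meanwhile, for each fixed $v$, the law of large numbers for quadratic forms gives $\tfrac{1}{n-m}Y_v^TR_XY_v=\tfrac{1}{n-m}\epsilon_v^TR_X\epsilon_v\to_{p}\Psi_{v,v}$. Since $m_1F_{vn}$ is the ratio of the $v$th diagonal entry of $Y^T(R_{X_0}-R_X)Y$ to $\tfrac{1}{n-m}$ times the $v$th diagonal entry of $Y^TR_XY$, Slutsky's theorem (with $V$ fixed) gives $m_1F_n\to_{L}\bigl(W_{1,1}/\Psi_{1,1},\ldots,W_{V,V}/\Psi_{V,V}\bigr)=\mathrm{diag}\{\Phi W\Phi\}$, and $\Phi W\Phi\sim\mathcal{W}_V(m_1,\Phi\Psi\Phi)=\mathcal{W}_V(m_1,\Sigma)$, which is the claim.

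The only real obstacle is the multivariate CLT step: one must supply a negligibility condition on the design matrix $X$ (this is genuinely needed --- it fails when some column of $U$ concentrates mass on a single high-leverage observation) and handle non-Gaussian errors using only the operator-norm bound \eqref{eq:secondmoment} and cross-subject independence, in place of the exact distributional identity that makes Part~1 immediate. Everything else is bookkeeping with projections and standard convergence lemmas.
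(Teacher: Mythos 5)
Your proposal is correct and follows essentially the same route as the paper: factor the projections $R_{X_0}-R_X$ and $R_X$ into orthonormal-column pieces, obtain the exact (singular) Wishart law under normality for Part~1, and for Part~2 combine a CLT for the numerator (the paper does this via the Cram\'er--Wold device applied to $\mathrm{vec}(A^TY)$) with a law of large numbers for $Y_v^TR_XY_v/(n-m)$ and Slutsky plus continuous mapping. The only substantive difference is that you explicitly flag the Lindeberg-type negligibility condition on the design needed for the triangular-array CLT, a condition the paper's proof leaves implicit when it simply invokes ``the central limit theorem.''
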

In order to make the statistics robust regardless of whether the errors are normal we use the transformation
\begin{equation}\label{eq:transformed}
Z_{vn} = \Phi^{-1}\{\Phi_n(F_{vn})\},
\end{equation}
where $\Phi_{n}$ is the cumulative distribution function (CDF) for a $\mathcal{F}(m_1, n-m)$ random variable and $\Phi^{-1}$ is the inverse CDF for a $\chi^2_{m_1}$ random variable.
If equation \eqref{eq:normality} is true, then the transformed statistics \eqref{eq:transformed} are marginally $\chi^2_m$ statistics.
When the errors are nonnormal then $Z_{vn}$ is asymptotically $\chi^2_{m_1}$ (see Theorem \ref{thm:nulldom} in the Supplement).
The asymptotic joint distribution of the statistics given in Theorem \ref{thm:JointFdistribution} allows us to use a diagonal singular Wishart to compute approximate adjusted \emph{p}-values.
To compute probabilities we do not have to sample the full matrix \eqref{eq:wisharts}, since only the diagonal elements, $\text{diag}\{Y^T (R_{X} - R_{X_F}) Y\}$, are required.

To find adjusted \emph{p}-values, $\tilde p_v$, for the single-step procedure we compute the probability
\begin{equation}
\label{eq:sspvalue}
\tilde p_v = \P\left( \max_{k\le V}\lvert Z_k \rvert > \lvert Z_{v0} \rvert \right),
\end{equation}
where
\begin{equation}
\label{eq:standardnormal}
Z = (Z_1, \ldots, Z_V) \sim \mathrm{diag}\left\{ \mathcal{W}_V(m_1, \Sigma) \right\},
\end{equation}
and $Z_{v0}$ is the observed statistic at location $v$.
Theorems \ref{thm:nulldom} and \ref{thm:aFWERc} in the Supplement guarantee asymptotic control of the FWER when $\eqref{eq:standardnormal}$ is used as the null distribution.

In practice the joint distribution \eqref{eq:standardnormal} is unknown due to the fact that $\Sigma$ is unobserved, so the probability \eqref{eq:sspvalue} cannot be computed.
We must obtain an estimate for $\Sigma$ in order to compute estimates of these probabilities.
Since the diagonal, $\Sigma_{v,v} = 1$, we only need to estimate the off-diagonal elements.
By estimating $\rho_{ij}$ with the consistent estimator
\begin{equation*}
\hat \rho_{jk} = \frac{Y_j^T R_X Y_k}{ \hat \sigma_j \hat \sigma_k},
\end{equation*}
we are guaranteed asymptotic control of the FWER (see the Supplement).
This estimator is biased toward zero in finite samples, and yields conservative estimates of the correlation.
Note that using the residuals of the full model is crucial here as that estimator yields consistent estimates of the correlation regardless of whether the alternative is true in each location.
Instead of using $\Sigma$ in \eqref{eq:standardnormal} we use the estimated covariance matrix of the test statistics
\begin{equation}\label{eq:sigmahat}
\hat\Sigma_{jk} =  \left\{
\begin{array}{lr}
1 & : j=k \\
\hat\rho_{jk} & : j \ne k
\end{array}\right..
\end{equation}
Importantly, the covariance of the tests statistics does not depend on what model parameter is being tested, so a single null distribution can be used for tests of all parameters provided the tests are on the same degrees of freedom. This conserves computing time relative to the permutation procedure which must estimate a null distribution for each test.

\subsection{The parametric-bootstrap procedures}

We compute \emph{p}-values using a parametric bootstrap: We use the estimate of $\hat \Sigma$ to generate $B$ diagonal singular Wishart statistics.
Because the rank of $\hat \Sigma$ is at most $\min\{(n-m), V\}$ it does not require the storage of the full $V\times V$ covariance matrix if $V > (n-m)$.
This gives the following procedure for estimating the null distribution using the parametric bootstrap.

\begin{procedure}[Parametric bootstrap null estimation]
Assuming the model \eqref{eq:model}:
\label{proc:sdpb}
 \begin{enumerate}
  \item Regress $Y_v$ onto $X$ to obtain the test statistics for $H_{v0}: \beta_v = \beta_{v0}$ using \eqref{eq:transformed}. Let $ Z_{(1)0} < Z_{(2)0} <, \ldots, < Z_{(V)0}$ denote the ascending test statistics and $\tilde E = [ \hat \epsilon_{(1),0}, \ldots, \hat \epsilon_{(V)0} ]$, the $n\times V$ matrix of their associated residuals from model \eqref{eq:model}.
  \item Standardize $\tilde E$ so that the column norms are 1. Denote the standardized matrix by $E$. Let $r = \text{rank}(E) = \min\{n-m, V\}$.
 \item Use $E$, $m_1$, and $r$ to generate the null test statistics $Z_{b}= (Z_{(1)0}, \ldots, Z_{(V)b})$ for $b=1,\ldots,B$.
 \begin{enumerate}
 \item Perform the singular value decomposition of $E = U D \tilde M^T$ where $D$ is an $r \times r$ diagonal matrix and $U$ and $M$ have orthonormal columns. Let $M = \tilde M D $.
 \item Generate an $ r \times m_1$ matrix $S_b$ with independent standard normal entries.
 \item Obtain the null statistics $Z_{b} = \text{diag}(M S_b S_b^T M^T)$. $Z_b$ are distributed according to a diagonal singular Wishart distribution.
 \end{enumerate}
 \end{enumerate}
\end{procedure}
The singular value decomposition only needs to be performed once for the entire procedure.
Computing the statistics does not require multiplying $M S_b S_b^T M^T$, because we only need the diagonal entries.
Procedures \ref{proc:bootSS} and \ref{proc:bootSD} are used to compute single-step and step-down adjusted \emph{p}-values from the bootstrap sample.
\section{Methods}
\label{sec:methods}

Code to perform the analyses presented in this manuscript is available at \url{https://bitbucket.org/simonvandekar/param-boot}. While the processed data are not publicly available, unprocessed data are available for download through the Database of Genotypes and Phenotypes (dbGaP) \citep{satterthwaite_philadelphia_2016}.
We provide simulated region-wise data with the code so that readers can perform the region-wise analyses presented here.

\subsection{Cerebral blood flow data description}
 
The Philadelphia Neurodevelopmental Cohort (PNC) is a large initiative to understand how brain maturation mediates cognitive development and vulnerability to psychiatric illness \citep{satterthwaite_neuroimaging_2014}.
In this study, we investigate image and regional measurements of CBF using an arterial spin labeling (ASL) sequence collected on 1,578 subjects, ages 8-21, from the PNC (see \citet{satterthwaite_neuroimaging_2014} for details).
Abnormalities in CBF have been documented in several psychiatric disorders including schizophrenia \citep{pinkham_resting_2011} and mood and anxiety disorders \citep{kaczkurkin_elevated_2016}.
Establishing normative trajectories in CBF is critical to characterizing neurodevelopmental psychopathology \citep{satterthwaite_impact_2014}.

T1-weighted structural images are processed using tools included in ANTs \citep{tustison_n4itk:_2010}.
Voxel-wise analyses and simulations are restricted to gray matter locations.
For region-wise analyses CBF is averaged within 112 anatomically defined gray matter regions.

The CBF image is co-registered to the T1 image using boundary-based registration \citep{greve_accurate_2009}, and normalized to the custom PNC adolescent template using the top-performing diffeomorphic SyN registration included in ANTs \citep{avants_reproducible_2011,klein_evaluation_2009}. Images were down-sampled to 2mm resolution and smoothed with a Gaussian kernel at a FWHM of 6mm prior to group-level analysis.

In order to parcellate the brain into anatomically defined regions, we use an advanced multi-atlas labelling approach which creates an anatomically labeled image in subject space \citep{avants_reproducible_2011}.
The ASL data are pre-processed using standard tools included with FSL \citep{jenkinson_improved_2002,satterthwaite_impact_2014}.
The T1 image is then co-registered to the CBF image using boundary-based registration \citep{greve_accurate_2009}, the transformation is applied to the label image, and then CBF values are averaged within each anatomically-defined parcel.

Of the 1,601 subjects with imaging data 23 did not have CBF data. Additionally, 332 were excluded due to clinical exclusionary criteria, which include a history of medical disorders that affect the brain, a history of inpatient psychiatric hospitalization, or current use of a psychotropic medication.
An additional 274 subjects were excluded by an imaging data quality assurance procedure which included automatic and manual assessments of data quality and removal of subjects with negative mean CBF values in any of the anatomical parcels.
These exclusions yielded a total of 972 subjects used for the imaging simulations and analysis.

\subsection{Simulations}

\subsubsection{Synthetic simulations}

In order to better understand convergence rates for joint MTP procedures we perform simulations where the data generating covariance structure is known.
The parametric procedures like Holm and PBJ rely on approximations due to estimating the covariance structure and using a multivariate normal approximation.
We assume a normal data generating distribution so that the test statistics are T-distributed.
Specifically, we assume two samples
\begin{align*}
X_i & \sim \mathcal{N}(\mu_0, \Sigma) \text{ for $i \le n/2$} \\
X_i & \sim \mathcal{N}(\mu_1, \Sigma) \text{ for $i > n/2$},
\end{align*}
where $\mu_k \in \R^V$, and we perform the tests $H_{0v}: \hat \mu_{1v} - \hat \mu_{0v}=0$ for $v=1,\ldots,V$.
We vary the values $n\in \{40, 80, 100, 200\}$, $p\in \{100, 200, 500, 1000, 10000\}$, $\Sigma_{jj} = 1$, and $\Sigma_{jk} \in \{0, 0.9^{\lvert j - k\rvert}, (-0.9)^{\lvert j - k\rvert} \} $ for $j\ne k$.
We make the first 10\% of the components of $\mu_1$ nonzero with parameter $\mu_{1v} = 0.4$ and all other mean parameters zero. We perform 500 simulations to estimate FWER and power.

Several step-down MTPs are considered.
For insight into convergence properties, we let some MTPs rely on the unobserved covariance matrix, so they are not possible in practice.
We use the following notation:
$(T_n \mid \text{Holm})$ denotes Holm's procedure where a standard normal distribution is used to compute \emph{p}-values for the T-statistics.
$(Z_n \mid \text{Holm})$ is Holm's procedure where the T-statistics are transformed using \eqref{eq:transformed}, that is, the T-distribution is used to compute \emph{p}-values. $(T_n \mid \text{Holm})$ and $(Z_n \mid \text{Holm})$ both demonstrate the how conservative using Holm procedure is and compare the effect of a normal approximation to the marginal densities.
$(T_n \mid \Sigma)$ denotes adjusted \emph{p}-values computed using PBJ with Procedure \ref{proc:bootSD} using the true covariance $\Sigma$ and the raw T-statistics.
The FWER for $(T_n \mid \Sigma)$ gives us an idea of the sample size required for approximating a multivariate T-statistic with a normal distribution.
$(T_n \mid \hat \Sigma)$ uses PBJ with the sample estimate $\hat \Sigma$ using the raw T-statistics.
$(Z_n \mid \hat \Sigma)$ uses PBJ with the sample estimate $\hat \Sigma$ and the transformation \eqref{eq:transformed}.
$(T_n \mid \text{Perm})$, uses permutations to generate the joint distribution and untransformed T-statistics.
Within each simulation 1000 bootstraps and permutations are used to compute \emph{p}-values for the PBJ and PJ MTPs.

\subsubsection{Real data simulations}
To create realistic simulated data sets we use samples generated from real data to compare the FWER and power of the MTPs for region-wise ($V=112$) and voxel-wise analyses ($V=$127,756).
For each simulation we draw subsamples without replacement from the CBF data for sample sizes of $n=40,100,200,400$.
The region-wise simulations cover the case where $p>n$ and $p \approx n$.
For the voxel-wise simulations we smooth at FWHM$=6$.
We present voxel-wise results with different smoothing kernels in the Supplement.

Results from the the synthetic simulations demonstrate that deviations from normality increases the FWER above the nominal level (see Section \ref{sec:synthsim}).
For this reason, we perform the Yeo-Johnson transformation prior to performing hypothesis tests in the real data simulation and CBF analyses \citep{yeo_new_2000}.
The Yeo-Johnson transformation is a single parameter transformation similar to the Box-Cox that allows negative values for the outcome variable.
We estimate the parameter at each location in the image using a profile-likelihood approach \citep{yeo_new_2000}.
Inference is performed conditional on the estimated parameter.

In each simulation we fit the following model with real covariates including age, sex, race and motion (mean relative displacement; MRD) as well as artificially generated covariates
\begin{equation*}
\E Y_{iv} = \alpha_0 + \alpha_1\text{age}_i + \alpha_2\text{sex}_i + \alpha_3\text{race}_{i1} + \alpha_4\text{race}_{i2} + \alpha_5\text{MRD}_{i} + \sum_{j=1}^3 \beta_{jv} g_j
\end{equation*}
where $g_j$ are indicators for an artificial factor with 4 levels to represent different clinical groups in equal proportions and $\E Y_{iv}$ represents the conditional expectation of the transformed outcome.
Multiple groups were generated so that we could perform a test of the parameter for the second group indicator
\begin{equation}\label{eq:onedof}
H_{0v} : \beta_{1v}=0
\end{equation}
on one degree of freedom, and the test of 
\begin{equation}\label{eq:threedof}
H_{0v} : \beta_{jv} = 0 \text{ for all $j$}
\end{equation}
on 3 degrees of freedom.
To assess power, in each simulation we generate signal in randomly selected locations.
For the region-wise simulation we randomly selected 3 brain regions $v_k \in \{ 1, \ldots, 112\}$ for $k=1,2,3$, by setting $\beta_{1v_k}=10$ and $\beta_{jv} = 0$ otherwise.
For the voxel-wise simulations we first select a random gray matter voxel $v_0$ and create a cube with a radius of 6 voxels centered at $v_0$.
Let $N_{v_0} \subset\{1, \ldots, 127756\}$ denote the gray matter voxels within the cube.
We create a parameter image where $\beta_{1v}=120$ for all $v \in N_{v_0}$ and $\beta_{1v} = 0$ otherwise.
The generated parameter image is smoothed at FWHM$=6$mm and is added to the CBF images smoothed with the same kernel.
All other artificial parameters were set to $0$.

We use 1000 simulations to estimate FWER and power for the region-wise data and 500 simulations for the voxel-wise data, which take considerably longer to run.
All results are presented for the rejection level $\alpha=0.05$.
FWER is defined as the proportion of simulations where any true null hypothesis was rejected and power is estimated by the mean number of rejected hypotheses amongst the false null hypotheses.
For the voxel-wise simulations the false null hypotheses were defined as all voxels in the smoothed parameter image where $\beta_{v1}>1$.
We use Wilson confidence intervals for the FWER estimated from the 500 simulations implemented within the {\tt binom} package in {\tt R} \citep{wilson_probable_1927,dorai-raj_binom_2014,brown_interval_2001}.
For power results we use a normal approximation for confidence intervals. Note however, that the variance estimate for the power results will be biased downward since the generated subsamples are dependent.

We compare the FWER and power of the Bonferroni, Holm, PBJ, and PJ MTPs.
Permutation tests are performed using Randomise \citep{winkler_permutation_2014}, which supports a wide array of possible tests, but only performs the single-step method.
In the region-wise simulations we assess the single-step and step-down PBJ MTP.
For the voxel-wise data we only compare single-step methods for the PBJ and PJ procedures because accurately estimating $V$ cutoffs when $V$ is large requires an infeasibly large number of samples.
For the region-wise simulations we use 5000 samples for both the PBJ and PJ MTPs.
For the voxel-wise simulations we use 5000 bootstraps for the PBJ, and 500 permutations with Randomise.
We use 500 permutations with Randomise because the procedure takes considerable time to run.
Mean run times to perform 5000 simulations are given in Figure \ref{fig:comptime}.
Estimates of the \emph{p}-values are unbiased for both procedures, but the \emph{p}-values for the permutation procedure will have higher variance as fewer permutations are used (see \cite{winkler_faster_2016}).
Wilson confidence intervals for a true \emph{p}-value of $p=0.05$ for 500 and 5000 permutations are $[0.03, 0.07]$ and $[0.04, 0.06]$.

In Section \ref{sec:Simulations} we discuss the computational complexity of the PJ procedure versus the parametric-bootstrap.
To compare actual computing time we take the mean of the time to perform the 1 and 3 degree of freedom tests with 5000 simulations for each method for the region-wise analyses.
For the computing times of the voxel-wise analyses we multiply the computing time for the PJ procedure by 10 because 10 times fewer permutations were used than bootstraps.
Note that this slightly over estimates the PJ computing time as it also multiplies the image load time, which is 1-2 minutes depending on the sample size.

\subsection{Cerebral blood flow statistical analysis}
\label{sec:CBFmethods}
We perform region- and voxel-wise analyses of CBF in order to identify locations where there are sex differences in development-related changes of CBF.
For the region-wise analysis, we test the sex by age interaction on the average CBF trajectories in the $V=112$ regions using an F-statistic from an unpenalized spline model.
For the voxel-wise analysis, we perform the same test in all $V=$127,756 gray matter voxels.

For the region-wise data we fit the age terms with thin plate splines with 10 degrees of freedom.
Thus, the numerator of the F-statistic has 9 degrees of freedom using the {\tt mgcv} package in {\tt R} \citep{R_2016,wood_fast_2011}.
Results from the simulation analyses and previous theoretical results \citep{gotze_rate_1991} demonstrate that multivariate convergence rates depends on the dimension of the vector of statistics (Tables \ref{tab:n40error} and \ref{tab:n100error}), and the degrees of freedom of the test (Figure \ref{fig:imagingt-statistics}).
For this reason, we use the Yeo-Johnson transformation for the PBJ procedure so that the transformed CBF data are approximately normal \citep{yeo_new_2000}.
We estimate the age terms for the voxelwise data on 5 degrees of freedom, so that the test for the interaction is on 4 degrees of freedom.
Race and motion (mean relative displacement; MRD) are included as covariates at each location for both analyses.
Similar results were presented in a previous report \citep{satterthwaite_impact_2014} using Bonferroni adjustment.
We also perform the voxel-wise analyses using a 10 degrees of freedom spline basis in the Supplement.

The same MTPs are compared for the region- and voxel-wise CBF analysis as used in simulations.
For the region- and voxel-wise analyses 10,000 samples are used for the PBJ and PJ procedures.
We present the corrected results with the FWER controlled at $\alpha=0.01$ for the region-wise analysis and $\alpha=0.05$ for the voxel-wise analysis.
A more conservative threshold is used for the region-wise data as the smaller number of comparisons and noise reduction from averaging within regions increases power considerably.

\section{Simulation Results}
\label{sec:Simulations}

\subsection{Synthetic simulation results}
\label{sec:synthsim}

We use simulations to explore how the FWER is affected by using asymptotic approximations and estimating the covariance matrix in finite sample sizes.
Results are shown for sample sizes of $n=40$ and $n=100$ (Tables \ref{tab:n40error} and \ref{tab:n100error}).
The results demonstrate that by using the sample covariance estimate in the PBJ MTP the FWER is only slightly inflated for $n=40$ and when $n=100$ the FWER is controlled at the nominal level for all the dimensions considered ( Column $Z_n \mid \hat\Sigma$).
Note, that when the transformation \eqref{eq:transformed} is not used all procedures have inflated FWERs (Columns $T_n$ in Tables \ref{tab:n40error} and \ref{tab:n100error}) due to the fact that the multivariate normal approximation is not accurate and is worse for a larger number of tests \citep{gotze_rate_1991}.
We can conclude that most of the error in estimating the FWER comes from the normal approximation.
Even when $\hat \Sigma$ is rank deficient it still provides nominal FWER control.
Interestingly, the PJ procedure controls the FWER for all the sample sizes considered.
While permutation tests are exact for univariate distributions \citep{lehmann_testing_2006}, to our knowledge there is no theoretical justification that multivariate permutations are accurate when the number of statistics exceeds the sample size.
Finally, the PBJ and PJ MTPs control the FWER at the nominal level, while Holm's procedure is conservative for correlated test statistics.

\begin{table}[p!]
\centering
\begin{tabular}{llrrrrrr}
  \hline
& $n=40$ &  $T_n \mid \text{Holm}$ & $Z_n \mid \text{Holm}$ & $T_n \mid \Sigma$ & $T_n \mid \hat \Sigma$ & $Z_n \mid \hat \Sigma$ & $T_n \mid \text{Perm}$ \\ 
  \hline
  \multirow{6}{*}{\rotatebox[origin=c]{90}{Indep}}
&$m =  100 $ & 12 & 6 & 12 & 16 & 6 & 5 \\ 
 & $m =  200 $ & 13 & 4 & 13 & 17 & 4 & 4 \\ 
  &$m =  500 $ & 16 & 6 & 17 & 21 & 7 & 5 \\ 
  &$m =  1000 $ & 20 & 4 & 20 & 23 & 5 & 3 \\ 
  &$m =  5000 $ & 27 & 5 & 28 & 29 & 6 & 3 \\ 
  &$m =  10000 $ & 38 & 5 & 39 & 40 & 9 & 5 \\
 \hline
\multirow{6}{*}{\rotatebox[origin=c]{90}{Pos AR(1)}} 
  &$m =  100 $ & 6 & 2 & 10 & 12 & 5 & 5 \\ 
  &$m =  200 $ & 8 & 3 & 13 & 13 & 5 & 6 \\ 
  &$m =  500 $ & 9 & 3 & 14 & 18 & 6 & 4 \\ 
  &$m =  1000 $ & 13 & 2 & 17 & 22 & 5 & 5 \\ 
  &$m =  5000 $ & 22 & 2 & 27 & 32 & 4 & 3 \\ 
  &$m =  10000 $ & 26 & 4 & 33 & 34 & 7 & 6 \\ 
 \hline
  \multirow{6}{*}{\rotatebox[origin=c]{90}{Neg AR(1)}} 
  &$m =  100 $ & 7 & 2 & 11 & 12 & 6 & 5 \\ 
  &$m =  200 $ & 6 & 3 & 11 & 16 & 5 & 4 \\ 
  &$m =  500 $ & 7 & 2 & 11 & 18 & 4 & 3 \\ 
  &$m =  1000 $ & 12 & 3 & 17 & 19 & 6 & 4 \\ 
  &$m =  5000 $ & 20 & 3 & 25 & 29 & 7 & 4 \\ 
  &$m =  10000 $ & 27 & 4 & 34 & 35 & 6 & 5 \\ 
   \hline
\end{tabular}
\caption{Type 1 error results for $n=40$ to assess convergence rates. Values are mean percentage of correctly rejected tests across 500 simulations. Test statistics were simulated as normal with independent (Indep), positive autoregressive (Pos AR(1)), and negative autoregressive (Neg AR(1)) correlation structures with $\rho=0.9$ and $\rho=-0.9$. The number of tests was varied within $m=(200, 500, \text{1,000}, \text{5,000}, \text{10,000})$ with 10\% non-null test statistics. Detailed descriptions of the column names are given in Section \ref{sec:methods}.
Results demonstrate that the majority of error in the PBJ procedure is due to the convergence of the T-statistics to a normal distribution.}
\label{tab:n40error}
\end{table}

\begin{table}[p!]
\centering
\begin{tabular}{llrrrrrr}
  \hline
& $n=100$ &  $T_n \mid \text{Holm}$ & $Z_n \mid \text{Holm}$ & $T_n \mid \Sigma$ & $T_n \mid \hat \Sigma$ & $Z_n \mid \hat \Sigma$ & $T_n \mid \text{Perm}$ \\ 
  \hline
  \multirow{6}{*}{\rotatebox[origin=c]{90}{Indep}} 
&$m =  100 $ & 6 & 6 & 6 & 8 & 6 & 4 \\ 
  &$m =  200 $ & 6 & 4 & 7 & 8 & 5 & 5 \\ 
  &$m =  500 $ & 7 & 6 & 8 & 8 & 6 & 5 \\ 
  &$m =  1000 $ & 10 & 4 & 10 & 10 & 5 & 6 \\ 
  &$m =  5000 $ & 10 & 4 & 11 & 10 & 4 & 5 \\ 
  &$m =  10000 $ & 14 & 4 & 15 & 14 & 4 & 5 \\ 
  \hline
  \multirow{6}{*}{\rotatebox[origin=c]{90}{Pos AR(1)}} 
  &$m =  100 $ & 2 & 1 & 6 & 4 & 4 & 3 \\ 
  &$m =  200 $ & 4 & 3 & 6 & 8 & 5 & 4 \\ 
  &$m =  500 $ & 5 & 3 & 7 & 8 & 4 & 5 \\ 
  &$m =  1000 $ & 7 & 2 & 8 & 9 & 5 & 6 \\ 
  &$m =  5000 $ & 6 & 3 & 9 & 8 & 4 & 4 \\ 
  &$m =  10000 $ & 9 & 3 & 11 & 10 & 4 & 3 \\ 
  \hline
  \multirow{6}{*}{\rotatebox[origin=c]{90}{Neg AR(1)}} 
  &$m =  100 $ & 5 & 3 & 7 & 9 & 6 & 5 \\ 
  &$m =  200 $ & 3 & 2 & 5 & 6 & 4 & 3 \\ 
  &$m =  500 $ & 4 & 2 & 8 & 8 & 5 & 4 \\ 
  &$m =  1000 $ & 5 & 3 & 8 & 8 & 4 & 4 \\ 
  &$m =  5000 $ & 8 & 4 & 10 & 9 & 6 & 5 \\ 
  &$m =  10000 $ & 8 & 3 & 11 & 9 & 6 & 5 \\ 
   \hline
\end{tabular}
\caption{Type 1 error results for $n=100$ to assess convergence rates. See Table \ref{tab:n40error} for details.}
\label{tab:n100error}
\end{table}

\subsection{Region-wise FWER and power}
We use simulations that sample from real imaging data to assess the FWER in finite samples for region-wise analyses.
For the test of hypothesis \eqref{eq:onedof}, the PBJ and PJ procedures maintain the nominal FWER for all samples (Figure \ref{fig:t-statistics}).
As expected, the FWER for the Bonferroni and Holm procedure are conservative.
The power of the joint testing procedures is higher than the marginal testing procedures (Figure \ref{fig:t-statistics}).
The PBJ has equal power to the PJ procedure.
The step-down procedure confers almost no benefit.

\begin{knitrout}
\definecolor{shadecolor}{rgb}{0.969, 0.969, 0.969}\color{fgcolor}\begin{figure}[p!]

{\centering \includegraphics[width=\maxwidth]{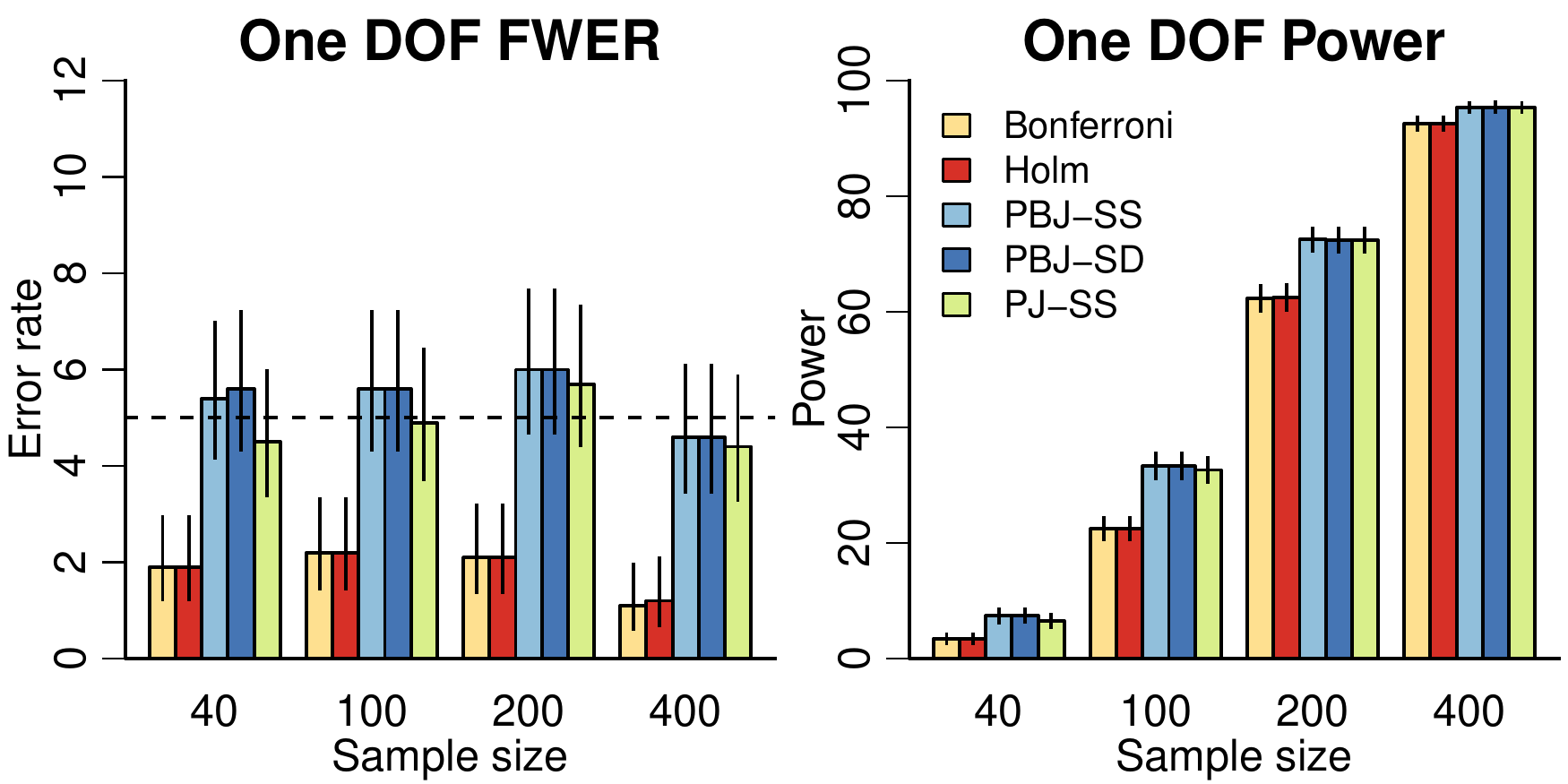} 
\includegraphics[width=\maxwidth]{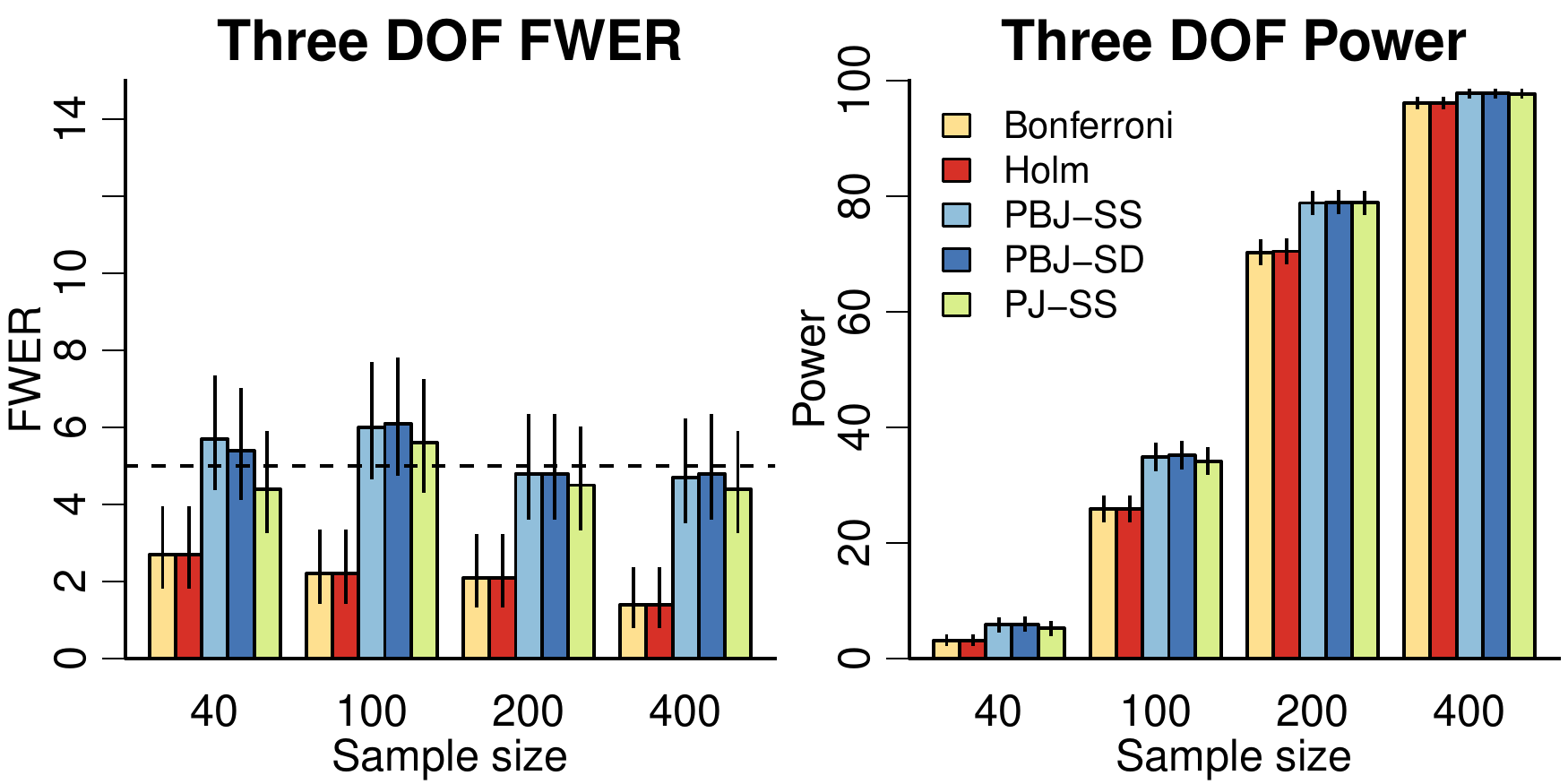} 

}

\caption{FWER and power for the F-statistic of \eqref{eq:onedof} on one degree of freedom (DOF)  and F-statistic of \eqref{eq:threedof} on three DOF for $V=112$ brain regions. Bonferroni and Holm both have conservative control. The PBJ maintains accurate FWE control even when $n<p$. The power for the PBJ and PJ procedures are equal. SS$=$Single-step; SD$=$Step-down. Lines indicate 95\% confidence intervals.}\label{fig:t-statistics}
\end{figure}

\end{knitrout}
For testing the hypothesis \eqref{eq:threedof} the PBJ and PJ procedures control the FWER at the nominal level for all sample sizes (Figure \ref{fig:t-statistics}).
The Bonferroni and Holm procedures give similarly conservative FWERs as the single degree of freedom test.
Power analyses demonstrate that the PBJ and PJ procedures have the same power.
As with testing \eqref{eq:onedof} the step-down procedure shows little improvement over the single-step.

\subsection{Voxel-wise FWER and power}
We use simulations that sample from real imaging data to assess the type 1 error and power for voxel-wise analyses.
For the test of \eqref{eq:onedof}, the PBJ maintains the nominal FWER for samples sizes greater than 200.
The PJ procedure maintains control for all sample sizes considered (Figure \ref{fig:imagingt-statistics}).
As expected, Bonferroni and Holm procedures have conservative FWER.
This conservative FWER leads to a reduction in power for these methods (Figure \ref{fig:imagingt-statistics}).
The power for PBJ was comparable to the PJ procedure.

\begin{knitrout}
\definecolor{shadecolor}{rgb}{0.969, 0.969, 0.969}\color{fgcolor}\begin{figure}[p!]

{\centering \includegraphics[width=\maxwidth]{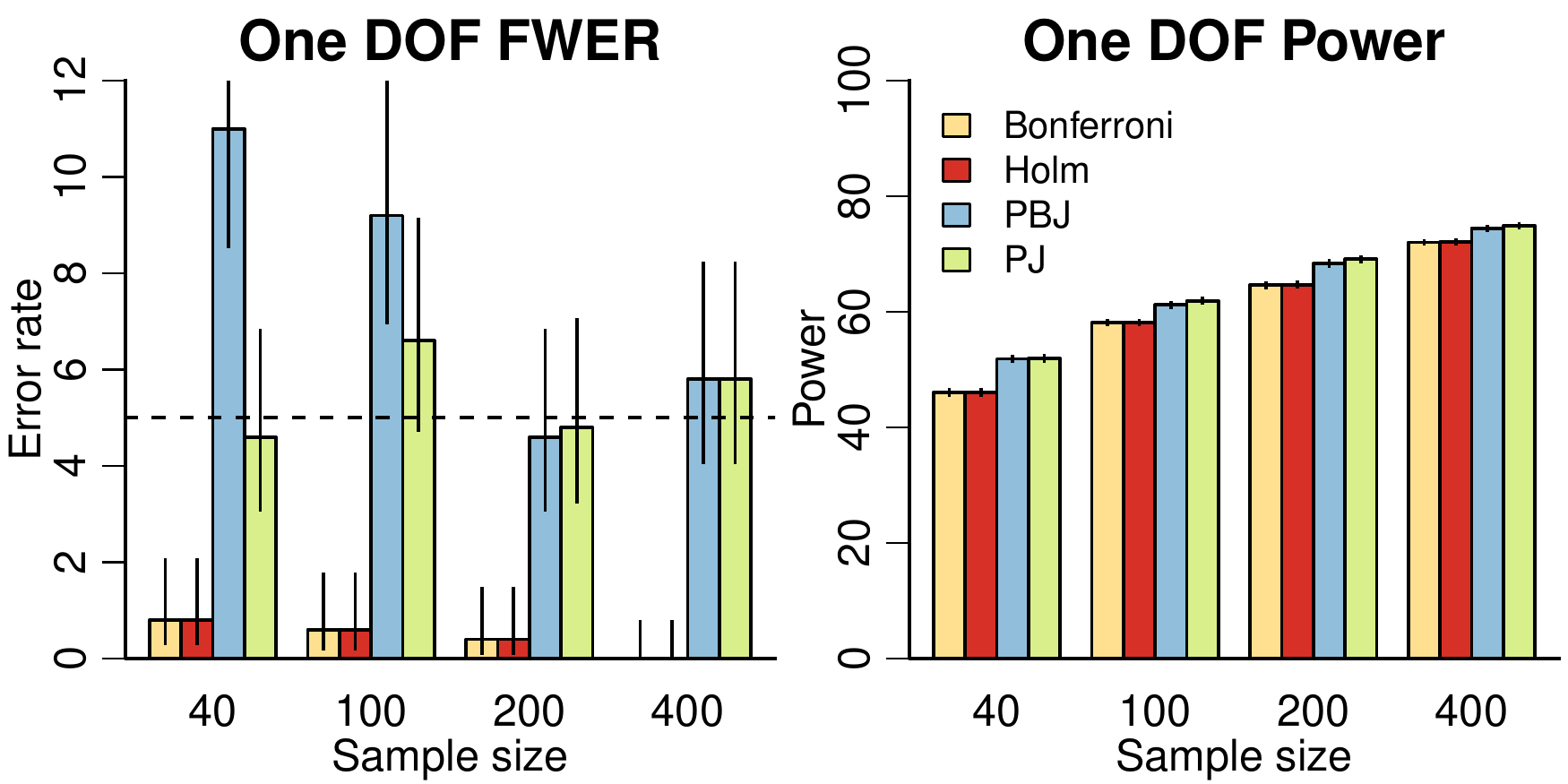} 
\includegraphics[width=\maxwidth]{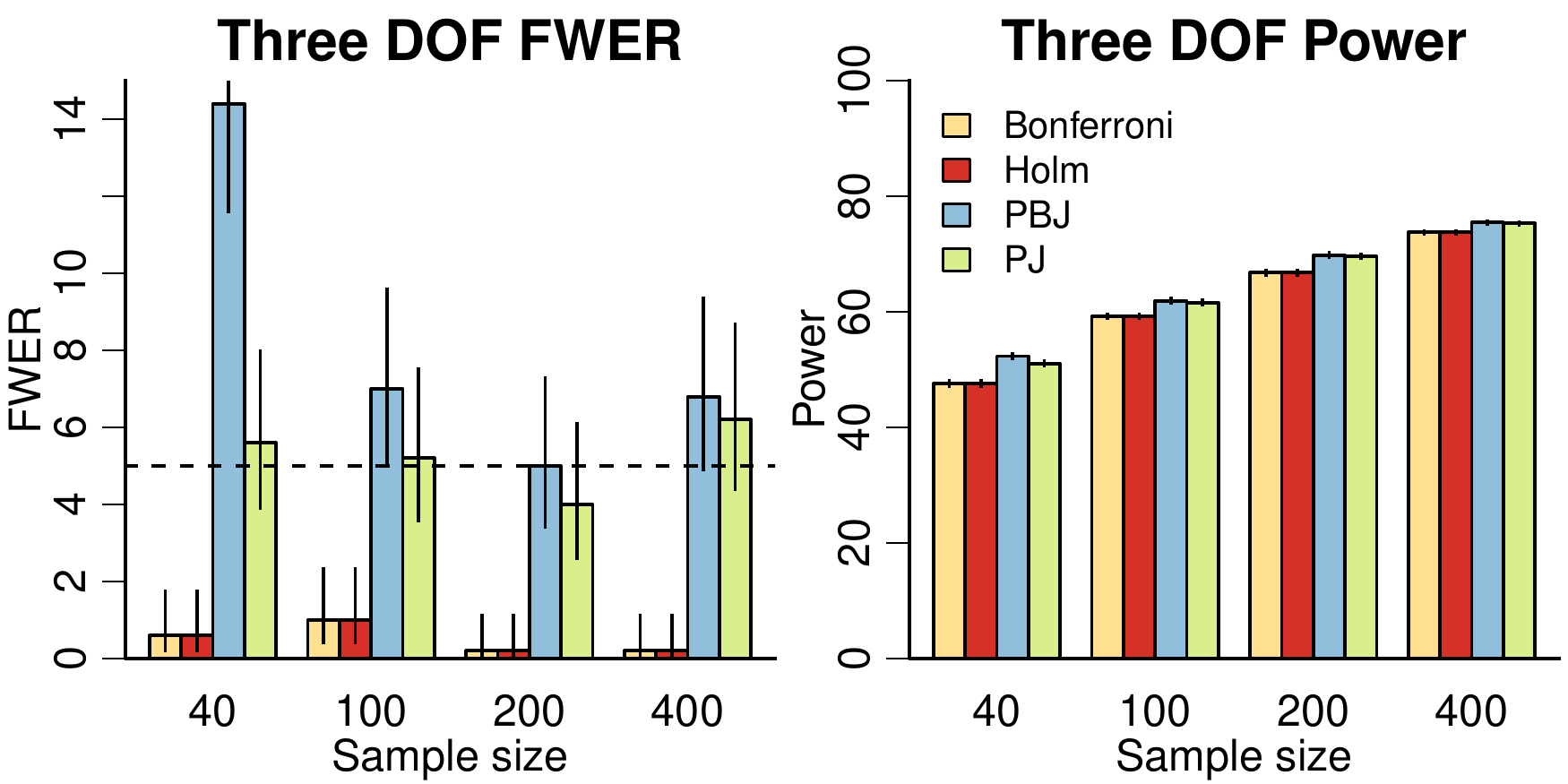} 

}

\caption{FWER and power for the F-statistic of \eqref{eq:onedof} on one degree of freedom (DOF) and F-statistic of \eqref{eq:threedof} on three DOF for the CBF image with $V=$127,756 voxels. Bonferroni and Holm both have conservative control.  The PBJ controls the FWER for samples sizes greater than 200. The power for the PBJ is approximately equal to the PJ tests. Lines indicate 95\% confidence intervals.}\label{fig:imagingt-statistics}
\end{figure}

\end{knitrout}

 For the test of \eqref{eq:threedof} on 3 degrees of freedom the PBJ controls the FWER for sample sizes larger than 200 (Figure \ref{fig:imagingt-statistics}).
 The PBJ procedure does not control the FWER for smaller sample sizes because the estimate of the covariance structure \eqref{eq:sigma} is not accurate enough to work as a plug-in estimator for the full rank covariance matrix.
 The PJ procedure maintains nominal control of the FWER for all sample sizes.
 The marginal testing procedures have the same conservative FWER control as above.
 Both joint testing procedures have greater power than the marginal procedures.

\subsection{Computing time}

While the PJ and PBJ procedures are both linear in the sample size, as the PBJ procedure works directly with the distribution of the test statistics we expect it to be faster than permutations.
To compare observed computing times we took the ratio of the time to evaluate both tests for PJ and PBJ procedures. The computing times for the sample sizes simulated for the region-wise data are given in Figure \ref{fig:comptime}.
The PBJ is at least twice as fast as the PJ procedure and increases with the sample size so that the computing time reduction is larger for larger sample sizes.

\begin{knitrout}
\definecolor{shadecolor}{rgb}{0.969, 0.969, 0.969}\color{fgcolor}\begin{figure}[p!]

{\centering \includegraphics[width=\maxwidth]{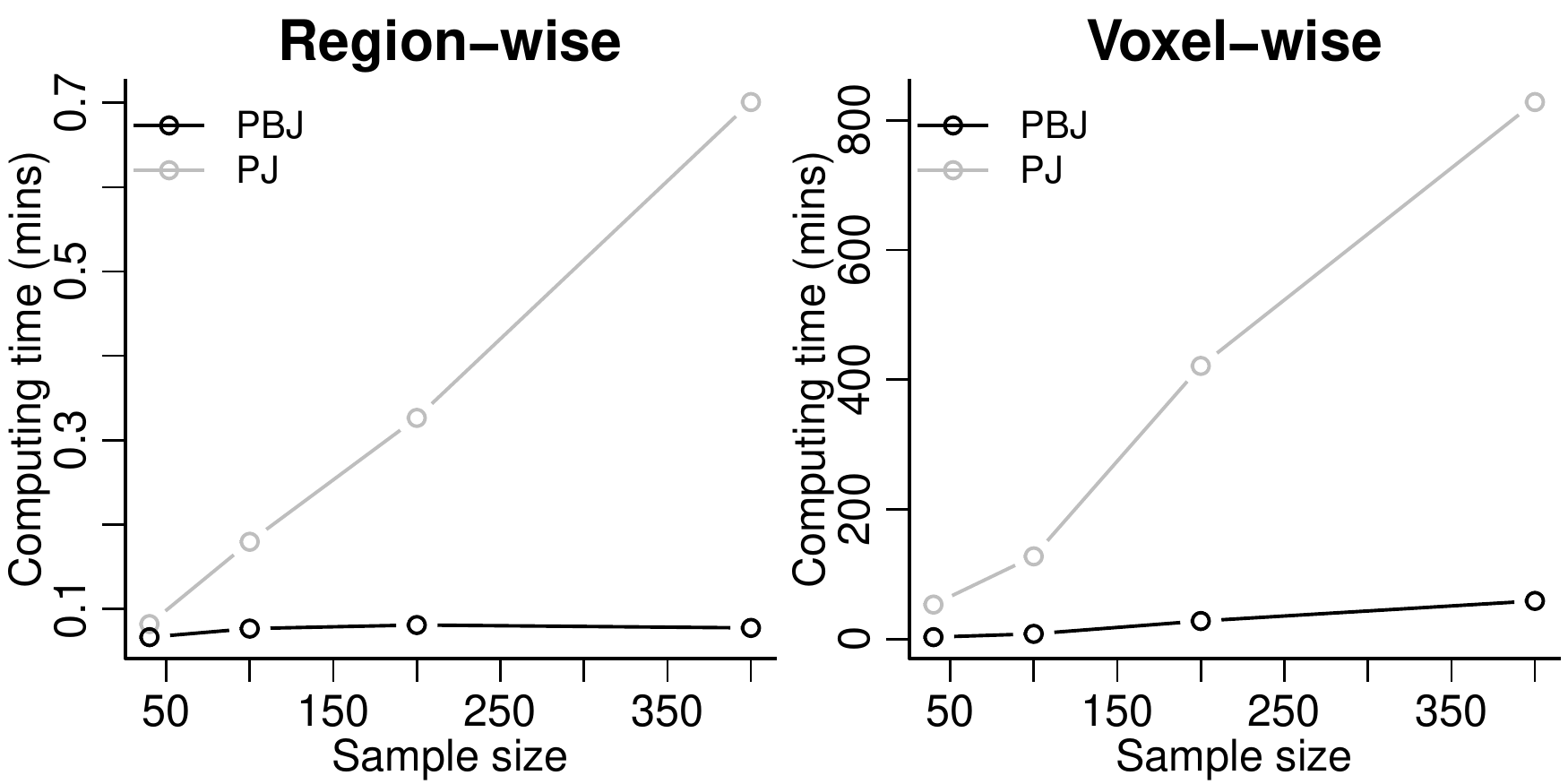} 

}

\caption[Computing times by sample size for the PBJ and PJ testing procedures for the region- and voxel-wise simulation analyses]{Computing times by sample size for the PBJ and PJ testing procedures for the region- and voxel-wise simulation analyses. We multiply the PJ computing time by 10 for the voxel-wise times because 10 times fewer permutations were used for that procedure.}\label{fig:comptime}
\end{figure}

\end{knitrout}

\section{Cerebral Blood Flow Results}
\label{sec:results}

To compare the testing procedures in the CBF data we perform a test for a nonlinear age-by-sex interaction on CBF trajectories for region- and voxel-wise analyses. For the voxel-wise analysis the PBJ and PJ MTPs take 5.9 and 69.2 hours to run, respectively. 
We perform an F-test for the interaction at each of the 112 regions (Figure \ref{fig:pncresults}).
The Bonferroni, Holm, PBJ, and PJ procedures reject $56$, $71$, $78$, and $61$ regions, respectively.
The PJ procedure is more conservative for two reasons. The first is that it is a single-step procedure; when there is a relatively large number of rejected tests then using a step-down procedure is more likely to improve power. The second reason is that the finite sample distribution of each of the regions is different. Regions near the edge of the brain are likely to be more heavily skewed due to imperfections in the image registration. By comparing all regions to the distribution of the maximum the PJ procedure is necessarily conservative because it compares to the most heavily skewed regions.
In contrast, by transforming the data prior to using the PBJ procedure the marginal distribution of the test statistics are approximately equal across regions.

\begin{figure}
\centering\includegraphics[width=\maxwidth]{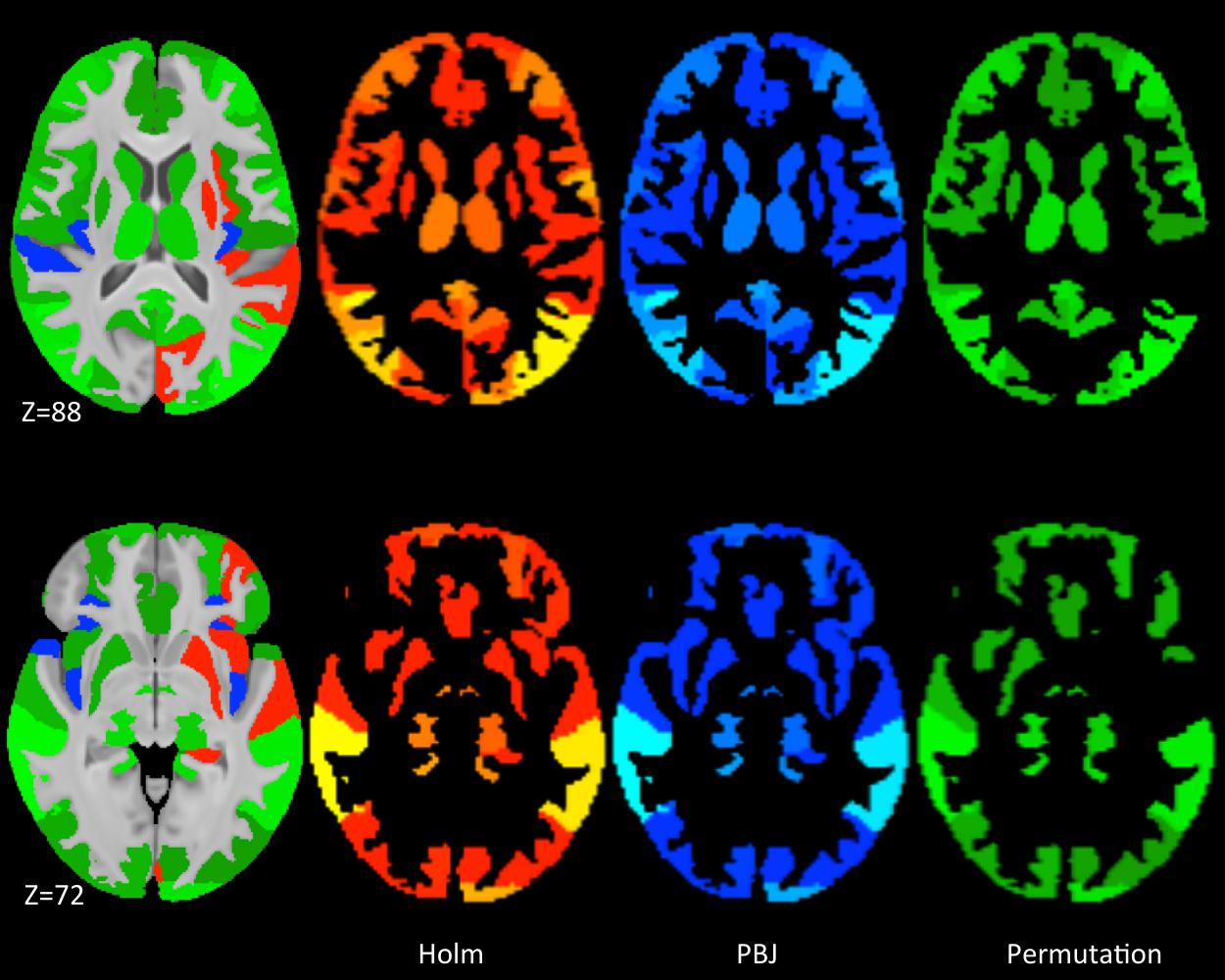}
\caption{FWER controlled results at $\alpha=0.01$ for Holm (red), PBJ step-down (blue), and PJ single-step (green) for the region-wise analysis. Color scale is $-\log_{10}(p)$ and shows results greater than 2. The left-most images show the overlay of PBJ, Holm, and PJ in that order. The color images show regions identified by Holm, PBJ, and NPBJ.}\label{fig:pncresults}
\end{figure}

For the voxel-wise analysis we perform the F-test for the nonlinear interaction on 4 degrees of freedom.
The single-step PBJ offers improved power over the Holm and PJ procedures (Figure \ref{fig:voxelpncresults}).
The Holm procedure ignores the covariance structure of the test statistics so yields conservative results.
The PJ procedure is more conservative even than the Holm procedure.
As with the region-wise analysis this is likely because the finite sample distribution of the test statistics is different: voxels near the edge of the brain tend to have higher variance and are likely heavily skewed.
If there is a subset of voxels with a heavily skewed distribution then taking the maximum test statistic will yield conservative inference for all locations that have tighter distribution.
By transforming the distribution of the voxels to be approximately normal the PBJ procedure offers improved power and speed.

\begin{figure}
\centering\includegraphics[width=\maxwidth]{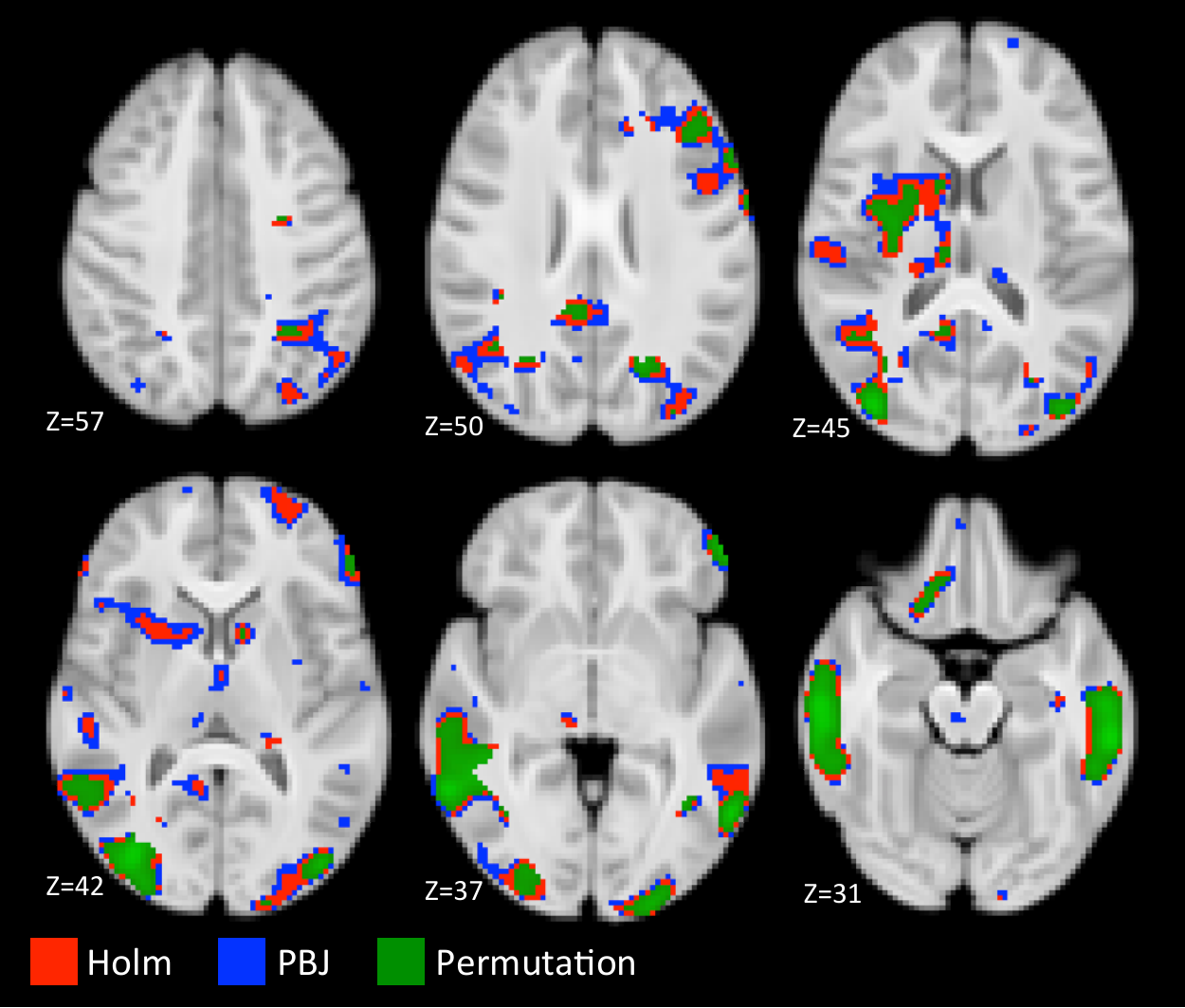} 
\caption{FWER controlled results at $\alpha=0.05$ for Holm (red), PBJ single-step (blue), and PJ single-step (green) for the voxel-wise analysis. Color scale is $-\log_{10}(p)$ for the adjusted \emph{p}-values and shows results greater than 1.3. The overlay order is the PBJ, Holm, and PJ procedures, so that green indicates regions where all three regions reject the null, red and green indicate regions where Holm and PJ reject, and the union of all colors is where PBJ rejects. Blue indicates locations where only the PBJ procedure rejects.}\label{fig:voxelpncresults}
\end{figure}

\section{Discussion}

We introduced a fast parametric bootstrap joint testing procedure as a new tool for multiple comparisons in neuroimaging.
The PBJ procedure improves computing time by generating the test statistics directly instead of permuting the original data.
If normality assumptions about the data generating distribution do not hold, then the Yeo-Johnson transformation can be used to obtain statistics that are approximately normal to improve the finite sample performance of the procedure.

In the CBF data analysis the PBJ is more powerful than the PJ MTP because the PJ MTP does not account for the fact that the finite sample distribution of the test statistics can be different.
Differences in the finite sample distribution of the statistics are attributable to certain regions near the edge of the brain having larger variance and skew.
For this reason taking the maximum across locations leads to conservative inference in locations that actually have tighter tails.
While the PBJ generates from a chi-squared distribution this ensures that a few heavy-tailed locations do not affect the distribution of the maximum.

In simulations, the step-down procedures provide little improvement in power over the single-step procedures.
However, in the regionwise analyses Holm rejected $15$ more regions than the Bonferroni procedure.
The reason for the difference is that step-down procedures offer little benefit when there is a small number of false null hypotheses and a large number of tests.

Using simulations we found that both joint procedures perform well, in some cases even when the number of tests exceeds the sample size.
This is quite surprising as it seems impossible that any given estimate of the joint distribution will satisfactorily reproduce the true joint distribution of the test statistics.
For example, if we consider the case of normal test statisics, $Z_n = (Z_{1n}, \ldots, Z_{Vn}) \sim N(0, \Sigma)$ with full rank covariance, then the sufficient statistic is $\hat \Sigma_n$, which can be of rank $n$ at most.
So, the probabilities generated conditioning on $\hat \Sigma_n$ assume $Z_n$ is restricted to a linear subspace of $\R^V$.
With nonnormal test statistics and more complex dependence structures it can only be more difficult to reproduce the null distribution.

MRI is a flexible noninvasive tool for studying neural aberrations related to psychiatric disorders such as schizophrenia \citep{pinkham_resting_2011} and mood and anxiety disorders \citep{kaczkurkin_elevated_2016}.
However, recent studies have shown that the PJ MTP is the only inference methodology to reliably control the FWER in neuroimaging data \citep{eklund_cluster_2016,silver_false_2011}.
We have shown that inference using the currently available permutation procedure can take days and lead to conservative inference.
Our proposed PBJ MTP is a reliable and fast testing procedure that will be a critical tool in studying functional and physiological features that can improve our understanding of the brain and its relation to behavior.


 \section*{Acknowledgments}
 We thank Mark van der Laan for a helpful discussion regarding null estimation.
 The PNC was funded by RC2 grants from the National Institute of Mental Health MH089983 and MH089924.
 Support for developing statistical analyses (RTS, TDS) was provided by a seed grant by the Center for Biomedical Computing and Image Analysis (CBICA) at Penn.
 Support for this project was provided by T32MH065218 (SNV); R01MH107235 (RCG, REG, RTS), R01MH107703 (TDS, RTS), and R01NS085211 (RTS).
  
 {\it Conflict of Interest}: None declared.

\nocite{dawid_matrix-variate_1981,van_der_vaart_asymptotic_2000}

\bibliographystyle{apalike}
\bibliography{MyLibrary}

\pagebreak
\begin{center}
\textbf{\large Supplementary material for, ``Faster family-wise error control for neuroimaging with a parametric bootstrap"}
\end{center}
\setcounter{equation}{0}
\setcounter{figure}{0}
\setcounter{table}{0}
\setcounter{page}{1}
\makeatletter
\renewcommand{\theequation}{S\arabic{equation}}
\renewcommand{\thefigure}{S\arabic{figure}}
\renewcommand{\bibnumfmt}[1]{[S#1]}
\renewcommand{\citenumfont}[1]{S#1}

The code to execute the simulations and analyses for this manuscript is available at \url{https://bitbucket.org/simonvandekar/param-boot}.

\section{Supplementary proofs}

The proof of Theorem \ref{thm:JointFdistribution} requires defining a matrix normal distribution and identifying some useful properties of matrix-variate random variables \citep{dawid_matrix-variate_1981}.

\begin{theorem}[Properties of matrix-variate random variables]
\label{thm:matrixprops}
The following are properties for matrix-variate normal random variables.
\begin{enumerate}
\item Let the $n \times p$ matrix $Z$ have independent standard normal entries, Then the matrix $A + CZB \sim \mathcal{MN}(A, CC^T, B^TB)$, is matrix-variate normal with mean matrix $A$, row covariance matrix $CC^T$, and column covariance $B^TB$.
\item Let the $n \times p$ matrix $X \sim \mathcal{MN}(0, I_{n\times n}, \Sigma)$, then $X^TX \sim \mathcal{W}_p( n, \Sigma)$. If $n<p$ then $\mathcal{W}_p( n, \Sigma)$ is a singular Wishart distribution.
\end{enumerate}

The following are properties of matrix-variate random variables

\begin{enumerate}
\item Let $Z$ be an $n\times p$ matrix-variate random variable. For positive semi-definite matrices $\Psi$ ($n\times n$) and $\Phi$ ($p \times p$), if the row covariance $\mathrm{cov}(Z^T_i) = \Psi \Phi_{ii}$ and the column covariance $\mathrm{cov}(Z_j) = \Phi \Psi_{ii}$, then we write $\mathrm{cov}(Z) = ( \Psi, \Phi)$.
\item If $Z$ is an $n\times p$ matrix-variate random variable with $\mathrm{cov}(Z) = ( \Psi, \Phi)$, then $\mathrm{cov}(\mathrm{vec}(Z)) = \Phi \otimes \Psi$.
\end{enumerate}

\end{theorem}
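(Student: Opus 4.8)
The plan is to reduce every assertion to a statement about the vectorization $\mathrm{vec}(Z)$, since each distributional claim becomes a routine Kronecker-product identity once the matrix is stacked into a single vector. The only algebraic facts I need are the vec--Kronecker identity $\mathrm{vec}(CZB) = (B^T \otimes C)\,\mathrm{vec}(Z)$, the transpose rule $(A\otimes B)^T = A^T \otimes B^T$, and the mixed-product rule $(A\otimes B)(C\otimes D) = (AC)\otimes(BD)$. Each of the four items then follows by stacking and bookkeeping the Kronecker factors.

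For the first item I would start from $Z$ having independent standard normal entries, so that $\mathrm{vec}(Z)\sim \mathcal{N}(0, I_{np})$. Applying the vec identity to the affine map $Z \mapsto A + CZB$ gives $\mathrm{vec}(A+CZB) = \mathrm{vec}(A) + (B^T\otimes C)\,\mathrm{vec}(Z)$, which is Gaussian with mean $\mathrm{vec}(A)$ and covariance $(B^T\otimes C)(B^T\otimes C)^T$. Using the transpose and mixed-product rules this collapses to $(B^TB)\otimes(CC^T)$, and by the Kronecker convention recorded in the final item (row factor $CC^T$, column factor $B^TB$) this is exactly $A + CZB \sim \mathcal{MN}(A, CC^T, B^TB)$. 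The only point requiring care is tracking the row/column convention so that the two factors land in the correct slots.

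For the second item I would use that $\mathcal{MN}(0, I_{n\times n}, \Sigma)$ has $\mathrm{vec}$-covariance $\Sigma \otimes I_n$, whose entries satisfy $\mathrm{Cov}(X_{ai},X_{bj}) = \Sigma_{ij}\delta_{ab}$; hence distinct rows are uncorrelated and, being jointly Gaussian, independent, with each row $\mathcal{N}(0,\Sigma)$. Writing $X^TX = \sum_{i=1}^n X_{(i)}^T X_{(i)}$ exhibits it as a sum of $n$ independent rank-one outer products of $\mathcal{N}(0,\Sigma)$ vectors, which is the definition of $\mathcal{W}_p(n,\Sigma)$ (citing \citep{srivastava_singular_2003}); since $\mathrm{rank}(X^TX)\le \min(n,p)$, the case $n<p$ yields a rank-deficient, hence singular, Wishart.

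The two items in the second block are essentially notational. The first records the definition of $\mathrm{cov}(Z) = (\Psi,\Phi)$ through the per-column covariances ($\propto \Psi$) and per-row covariances ($\propto \Phi$), so there is nothing to prove beyond checking that the definition is dimensionally consistent. For the second, under the separable (matrix-variate) structure implicit in the notation, I would verify that the $(i,j)$ column-block of $\mathrm{cov}(\mathrm{vec}(Z))$ equals $\Phi_{ij}\Psi$; reading the diagonal blocks recovers the column condition $\mathrm{cov}(Z_{\cdot i}) = \Phi_{ii}\Psi$ and reading within a fixed row $a$ recovers the row condition $\mathrm{cov}(\mathrm{row}_a) = \Psi_{aa}\Phi$, so the block matrix with $(i,j)$ block $\Phi_{ij}\Psi$ is precisely $\Phi\otimes\Psi$. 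I do not expect a genuine obstacle here, as the content is standard matrix-variate algebra; the single place to be careful is the convention clash between the row and column covariance slots in the mixed-product computation of the first item, and matching it against the ordering $\Phi\otimes\Psi$ asserted in the last item, since one misplaced transpose or factor is the only way this goes astray.
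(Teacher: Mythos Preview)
Your argument is correct, but note that the paper does not actually prove this theorem: it is stated as a collection of standard facts about matrix-variate distributions, with a citation to \citet{dawid_matrix-variate_1981}, and is then invoked as a tool in the proof of Theorem~\ref{thm:JointFdistribution}. So there is no paper proof to compare against.

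That said, your vectorization approach is exactly the right way to verify these identities. The only remark is that your treatment of the last item tacitly assumes the full separable covariance structure (i.e., that every $(i,j)$ block of $\mathrm{cov}(\mathrm{vec}(Z))$ equals $\Phi_{ij}\Psi$), whereas the stated hypotheses only constrain the diagonal blocks and the within-row covariances; the off-diagonal cross-covariances $\mathrm{cov}(Z_{ai},Z_{bj})$ for $a\ne b$, $i\ne j$ are not pinned down by the row/column conditions alone. In the paper's usage this is harmless because the matrix-variate structure is always present, but strictly speaking the implication as written requires that separability assumption, which you correctly flag as ``implicit in the notation.''
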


\begin{proof}[Proof of Theorem \ref{thm:JointFdistribution}]

For the first property write
\begin{equation}
\label{eq:projmat}
\begin{array}{rl}
(R_{X_0} - R_X) & = A A^T \\
R_{X} & = B B^T,
\end{array}
\end{equation}
where $A$ is an $n \times m_1$ matrix and $B$ is and $n \times (n-m)$ matrix both with with orthonormal columns.
Then, under the assumption \eqref{eq:firstmoment}, $A^T X_0 \alpha= (A^T A) A^T X_0 \alpha = A^T (R_{X_0} - R_X) X_0 \alpha = 0$, since $X_0$ is orthogonal to the column space of $R_{X_0}$ and $R_X$.
Then with normal errors \eqref{eq:normality}, Theorem \ref{thm:matrixprops} implies $A^T Y \sim \mathcal{MN}(0, I_{m_1\times m_1}, \Psi)$.
Similarly, $B^T X_0\alpha = 0$ and we obtain $B^T Y \sim \mathcal{MN}(0, I_{(n-m) \times (n-m)}, \Psi)$.
Then equation \eqref{eq:wisharts} follows from Theorem \ref{thm:matrixprops}.

For the proof of the second property let $A$ and $B$ be as defined in \eqref{eq:projmat}.
To prove the convergence of $m_1F_{n}$, we will invoke the central limit theorem for $A^TY$.
To do this we use the Cram\'er-Wold device \citep{van_der_vaart_asymptotic_2000} to prove that $A^TY$ converges in law to a $\mathcal{MN}(0, I_{m_1 \times m_1}, \Psi)$ distribution.

The $m_1 \times V$ matrix-variate random variable $A^TY$ has $v$th row covariance $\text{cov}(A^T Y_v) = \Psi_{v,v}I_{m_1 \times m_1}$ and $i$th column covariance $\text{cov}(A^T_i Y) = \Psi$.
Theorem \ref{thm:matrixprops} implies that the vectorized version has covariance $\text{cov}(\text{vec}(A^TY)) = \Psi \otimes I_{m_1 \times m_1}$.
Using the Cram\'er-Wold device we need only prove that for any vector $t$, 
\begin{equation}
\label{eq:asympnorm}
t^T\text{vec}(A^TY) \rightarrow_{L} \mathcal{N}(0, t^T (\Psi \otimes I_{m_1 \times m_1}) t ).
\end{equation}

Assumption \eqref{eq:firstmoment} implies $t^T \text{vec}(A^T \E Y) = 0$, by the same argument as above.
Assumption \eqref{eq:secondmoment} implies $t^T (\Psi \otimes I_{m_1 \times m_1}) t< \infty$. So, by the central limit theorem
\eqref{eq:asympnorm} holds, which implies $A^T Y \rightarrow_{L} \mathcal{MN}(0, I_{m_1 \times m_1}, \Psi)$ by the Cram\'er-Wold device.
From there, the continuous mapping theorem gives $\Phi^2 \text{diag}\{Y^T A A^T Y\} \rightarrow_L \text{diag}\{\mathcal{W}_V(m_1, \Sigma )\}$.
 
 For the denominator let $B$ be as defined in \eqref{eq:projmat}.
 Then, under $Y_{iv} \independent Y_{jv}$ for all $i\ne j$ and letting $W_v = B^T Y_v$, 
 \[
 \text{cov}(W_v) = B^T \text{cov}(Y_v) B = \Psi_{v,v} I_{(n-m)\times (n-m)}.
 \]
Assumption \eqref{eq:firstmoment} means $\E W_v = 0$ by the same argument as for $A^T \E Y$, so
\[
\frac{1}{n-m} \E Y^T_v R_X Y_v =  \E \frac{1}{n-m} W_{v}^T W_v = \frac{1}{n-m} \sum_{i=1}^{n-m} \E W_{iv}^2 = \Psi_{v,v}.
\]
By the weak law of large numbers $Y^T_v R_X Y_v/(n-m) \rightarrow_{P} \Psi_{v,v}$.
 The convergence of $m_1 F_{n} \rightarrow_L \text{diag}\{\mathcal{W}_V(m_1, \Sigma)\}$ follows by Slutsky's theorem \citep{van_der_vaart_asymptotic_2000}.

\end{proof}

 \renewcommand*{\proofname}{Proof}

The joint CDFs for the numerators can be used to estimate the asymptotic joint CDF of the transformed statistics \eqref{eq:transformed}.
We use Monte Carlo simulations to generate the numerators using the distributions given in Theorem \ref{thm:JointFdistribution} to estimate the null distribution of $Z_{vn}$ given in \eqref{eq:transformed}.

To show that the PBJ procedure guarantees asymptotic control of the FWER we must show that the joint distributions satisfy the null domination condition of Definition \ref{def:and}. When null domination holds then Theorem \ref{thm:aFWERc} guarantees asymptotic control of the FWER.
\begin{theorem}[Null domination]
\label{thm:nulldom}
Let $F_{n} = (F_{1n}, \ldots, F_{Vn})$, and $Z \sim Q_0 = \mathrm{diag}\{ \mathcal{W}_V(m_1, \Sigma)\}$. Let, $g_n(x) = \Phi_n^{-1}(\Phi(x))$ where $\Phi$ and $\Phi_n$ are as defined in \eqref{eq:transformed}. Then, the joint distribution $Q_n$ of $Z_{n}$, defined by the transformation \eqref{eq:transformed}, is asymptotically dominated by the joint distribution $Q_0$ of $Z$.
\end{theorem}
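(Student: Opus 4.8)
The plan is to reduce the asymptotic null domination inequality of Definition \ref{def:and} to a statement about the raw $F$-statistics, whose limiting joint law is already supplied by Theorem \ref{thm:JointFdistribution}, and then to send the deterministic cutoff to its limit. Fix $x \in \R$. Because the $\chi^2_{m_1}$ quantile function $\Phi^{-1}$ and the $\mathcal{F}(m_1,n-m)$ distribution function $\Phi_n$ are both strictly increasing on the relevant ranges, the transformation \eqref{eq:transformed} is strictly increasing, so for $x>0$ we have $\{Z_{vn}>x\}=\{F_{vn}>g_n(x)\}$ with $g_n(x)=\Phi_n^{-1}(\Phi(x))$, and hence
\[
\P\!\left(\max_{v\in H_0} Z_{vn} > x\right) = \P\!\left(m_1\max_{v\in H_0} F_{vn} > m_1 g_n(x)\right).
\]
For $x\le 0$ both sides of the domination inequality equal $1$, since the transformed statistics and the diagonal Wishart entries are nonnegative, so only $x>0$ needs attention.

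First I would check that the cutoffs converge, $m_1 g_n(x)\to x$. The quantity $m_1 g_n(x)$ is the $\Phi(x)$-quantile of $m_1\,\mathcal{F}(m_1,n-m)$; writing the $F$ variable as a ratio of independent chi-squares and using $\chi^2_{n-m}/(n-m)\rightarrow_P 1$ gives $m_1\,\mathcal{F}(m_1,n-m)\rightarrow_L\chi^2_{m_1}$, so the associated distribution functions converge to the continuous strictly increasing $\Phi$, and quantile convergence at the interior point $\Phi(x)\in(0,1)$ yields $m_1 g_n(x)\to\Phi^{-1}(\Phi(x))=x$.

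Next I would identify the limit of $m_1\max_{v\in H_0}F_{vn}$. Each $F_{vn}$ depends only on the data $Y_v$ at location $v$, and for $v\in H_0$ the null $\beta_v=0$ holds, so the joint law of $\{Y_v:v\in H_0\}$ is exactly the one arising from model \eqref{eq:model} with the null imposed throughout and error covariance the principal submatrix $\Psi_{H_0,H_0}$ --- in particular it does not depend on which nulls in $H_0^c$ are false. This subcollection satisfies the hypotheses \eqref{eq:firstmoment}--\eqref{eq:secondmoment} of Theorem \ref{thm:JointFdistribution}, which gives $m_1(F_{vn})_{v\in H_0}\rightarrow_L\mathrm{diag}\{\mathcal{W}_{|H_0|}(m_1,\Sigma_{H_0,H_0})\}$. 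Since a principal submatrix of $\mathcal{W}_V(m_1,\Sigma)$ is $\mathcal{W}_{|H_0|}(m_1,\Sigma_{H_0,H_0})$-distributed (split the matrix-normal representation by columns), this limit is the law of $(Z_v)_{v\in H_0}$ for $Z\sim Q_0$, and the continuous mapping theorem applied to $\max$ gives $m_1\max_{v\in H_0}F_{vn}\rightarrow_L W:=\max_{v\in H_0}Z_v$.

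Finally I would combine these: Slutsky's theorem gives $m_1\max_{v\in H_0}F_{vn}-m_1 g_n(x)\rightarrow_L W-x$, so the Portmanteau theorem for the closed half-line $[0,\infty)$ yields
\[
\limsup_{n\to\infty}\P\!\left(\max_{v\in H_0} Z_{vn} > x\right) \le \limsup_{n\to\infty}\P\!\left(m_1\max_{v\in H_0}F_{vn} - m_1 g_n(x) \ge 0\right) \le \P(W\ge x),
\]
and since each $Z_v$ is a unit-scale $\chi^2_{m_1}$ variate, $W$ is a maximum of finitely many atomless variables, so $\P(W\ge x)=\P(W>x)=\P(\max_{v\in H_0}Z_v>x)$, which is the desired bound. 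I expect the main obstacle to be the bookkeeping in the third step --- arguing cleanly that the weak limit of the $H_0$-subvector of $F$-statistics is the appropriate marginal of the diagonal singular Wishart irrespective of the alternatives on $H_0^c$ --- together with making sure no probability mass is lost at the boundary of the half-line in the Portmanteau step, which is precisely where atomlessness of $W$ enters.
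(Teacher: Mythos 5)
Your argument is correct, and its skeleton matches the paper's: both proofs rewrite the event $\{Z_{vn}>x\}$ as $\{F_{vn}>g_n(x)\}$ using monotonicity of the transformation \eqref{eq:transformed}, invoke Theorem \ref{thm:JointFdistribution} plus the continuous mapping theorem to get weak convergence of $m_1\max_v F_{vn}$ to the max of the diagonal singular Wishart, and then control the moving threshold $g_n(x)$. Where you diverge is in the technical finish and in one point of bookkeeping. The paper proves the stronger statement that the CDF of the maximum actually converges, $F_{Z_n}(x)\to F_Z(x)$, by combining Polya-type uniform convergence of $F_{F_n}(m_1^{-1}\cdot)$ (convergence in law to a continuous limit) with the pointwise convergence $g_n(x)\to m_1^{-1}x$ through the composition result Lemma \ref{lem:converge}; you instead settle for the one-sided bound required by Definition \ref{def:and}, obtained via quantile convergence $m_1 g_n(x)\to x$, Slutsky, the Portmanteau inequality for the closed half-line, and atomlessness of the limiting maximum to avoid losing mass at the boundary. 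These devices are interchangeable here, though the paper's route additionally shows the procedure is asymptotically exact rather than merely conservative. Your third step is actually more careful than the paper on one point: you explicitly restrict to the subvector of statistics indexed by the true nulls $H_0$, verify that conditions \eqref{eq:firstmoment}--\eqref{eq:secondmoment} hold for that subcollection regardless of which alternatives are true elsewhere, and identify its limit as the corresponding principal submatrix $\mathcal{W}_{|H_0|}(m_1,\Sigma_{H_0,H_0})$ of $Q_0$; the paper's proof takes the maximum over all $V$ statistics and leaves this reduction implicit, so your version aligns more directly with the definition of asymptotic null domination under a partial null configuration.
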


The following lemma is used to prove Theorem \ref{thm:nulldom} and is presented here without proof.

\begin{theorem}{lemma}
\label{lem:converge}
Let $f_n:\R \mapsto [0,1]$ converge uniformly to a continuous function $f$ and $g_n: \R \mapsto \R$ converge pointwise to $g$.
Then $f_n(g_n(x))$ converges pointwise to $f(g(x))$.
\end{theorem}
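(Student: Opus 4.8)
\textbf{Proof proposal for Lemma \ref{lem:converge}.}

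The plan is to prove the pointwise convergence $f_n(g_n(x)) \to f(g(x))$ at an arbitrary fixed point $x \in \R$ by a standard $\varepsilon$-splitting argument that uses uniform continuity of the limit $f$ to absorb the perturbation coming from $g_n(x) \ne g(x)$. First I would add and subtract the term $f(g_n(x))$ and apply the triangle inequality:
\[
\lvert f_n(g_n(x)) - f(g(x)) \rvert
\le \lvert f_n(g_n(x)) - f(g_n(x)) \rvert + \lvert f(g_n(x)) - f(g(x)) \rvert .
\]
The first term is at most $\sup_{y\in\R} \lvert f_n(y) - f(y)\rvert$, which tends to $0$ by the hypothesis that $f_n \to f$ uniformly; this is where uniform (as opposed to merely pointwise) convergence of $f_n$ is essential, because the argument $g_n(x)$ is a moving target.

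For the second term I would invoke continuity of $f$: since $g_n(x) \to g(x)$ pointwise, the sequence $g_n(x)$ is eventually inside any neighborhood of $g(x)$, and continuity of $f$ at the single point $g(x)$ gives $f(g_n(x)) \to f(g(x))$. (One does not even need uniform continuity of $f$ here, only continuity at the relevant point, though the lemma as stated assumes $f$ continuous everywhere.) Fixing $\varepsilon>0$, choose $\delta>0$ with $\lvert f(y) - f(g(x))\rvert < \varepsilon/2$ whenever $\lvert y - g(x)\rvert < \delta$; then pick $N$ large enough that for $n \ge N$ both $\lvert g_n(x) - g(x)\rvert < \delta$ and $\sup_y \lvert f_n(y)-f(y)\rvert < \varepsilon/2$. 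Combining the two bounds yields $\lvert f_n(g_n(x)) - f(g(x))\rvert < \varepsilon$ for all $n \ge N$, which is the claim.

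There is no real obstacle here — the lemma is a routine composition-of-limits fact — but the one point that requires care is recognizing \emph{why} uniform convergence of $f_n$ cannot be weakened to pointwise: pointwise convergence $f_n \to f$ would only control $f_n$ at fixed arguments, whereas the composition evaluates $f_n$ along the varying sequence $g_n(x)$. In the application (Theorem \ref{thm:nulldom}), $f_n$ will be a sequence of CDFs converging uniformly to a continuous limiting CDF (uniform convergence being automatic for CDFs with continuous limit, by Pólya's theorem), and $g_n$ will be the quantile-transformation maps $g_n(x) = \Phi_n^{-1}(\Phi(x))$, which converge pointwise to the identity; so both hypotheses are met and the lemma applies directly.
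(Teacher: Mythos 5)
Your proposal is correct and follows essentially the same argument as the paper's proof: the identical triangle-inequality split through $f(g_n(x))$, with continuity of $f$ at $g(x)$ absorbing the pointwise perturbation $g_n(x)\to g(x)$ and uniform convergence of $f_n$ handling the moving argument. The additional remarks on why uniform convergence cannot be weakened and on P\'olya's theorem in the application are accurate but not part of the paper's proof.
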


\begin{proof}
Let $x$ and $\epsilon$ be given.
Because $f$ is continuous there exists $\delta$ such that $\lvert f(y) - f(g(x)) \rvert <\epsilon/2$
for all $y$ such that $\lvert g(x) - y \rvert < \delta$.
Choose $N_1 = N_1(\epsilon, x) $ such that for all $n\ge N$, $\lvert g_n(x) -g(x) \rvert < \delta$, which is possible due to the pointwise convergence of $g_n$.
Because $f_n$ converges uniformly, there exists $N_2 = N_2(\epsilon)$ such that $\lvert f_n(y) - f(y) \rvert < \epsilon/2$ for all $y \in \R$.
Thus, it follows that for all $n \ge N = \max\{N_1, N_2\}$.
\[
\lvert f_n(g_n(x)) - f(g(x)) \rvert \le \lvert f(g_n(x)) - f(g(x)) \rvert + \lvert f_n(g_n(x)) - f(g_n(x)) \rvert
< \epsilon
\]

\end{proof}

\begin{proof}[Proof of Theorem \ref{thm:nulldom}]
For any $V$ dimensional vector of random variables $Z$ let $F_{ Z }(x) = \P(\max_{v\le V} Z_{j}  < x)$.
We will show that $F_{Z_n}(x) \rightarrow F_{Z}(x)$, for all $x$ as $n\to \infty$.
This implies that the null domination condition holds because then $\limsup_{n\to\infty} 1-F_{Z_n}(x)  \le 1-F_{Z}(x)$.

First note that $F_n$ are continuous random variables, so taking the maximum of $F_n$ is a continuous function.
The continuous mapping theorem implies that 
\begin{equation}\label{eq:convergence}
\max_{v} m_1F_{vn} \rightarrow_L \max_{v} Z_v,
\end{equation}
because $m_1 F_n \to_L Z$ by Theorem \ref{thm:JointFdistribution}.

$g_n$ is monotone in $x$, which implies $F_{Z_n}(x) = F_{F_n}\{g_n(x)\}$ because $Z_{vn} = g_n^{-1}(F_{vn})$. Thus for any $n$, 
\[
F_{Z_n}(x) - F_{Z}(x) = F_{F_n}\{g_n(x)\} - F_Z(x).
\]
$F_{F_n}(m_1^{-1}x) - F_Z(x)$ converges to zero pointwise by \eqref{eq:convergence} and since $F_{Z}$ is continuous then convergence in law implies uniform convergence to $F_Z(x)$ \citep[Lemma 2.11]{van_der_vaart_asymptotic_2000}. Since $g_n(x) \to m_1^{-1}x$, then uniform convergence of $F_{F_n}(m_1^{-1}x)$ to $ F_Z(x)$ and the continuity of $F_Z(x)$ imply $F_{F_n}\{g_n(x)\} - F_Z(x) \to 0$ by Lemma \ref{lem:converge}.

\end{proof}

The assumption of convergence  of $Z_n$ to $Z$ in Theorem \ref{thm:nulldom} holds for many statistics of interest by the central limit theorem.
When Theorem \ref{thm:nulldom} holds then, the following theorem from \citet[p. 205]{dudoit_multiple_2008} ensures asymptotic control of the FWER.
\begin{theorem}[Asymptotic control of the FWER by step-down procedure]
\label{thm:aFWERc}
Let $Z_{(1)n} < \ldots < Z_{(V)n}$ denote the ordered test statistics, and $H_{(1)}, \ldots H_{(V)}$ their associated hypotheses.
Assume that the distribution, $Q_n$, for the test statistics $Z_{(v)n}$ is dominated asymptotically by $Q_0$ and that $Z = Z_{(1)}, \ldots, Z_{(V)} \sim Q_0$.
 For a given level $\alpha$ define the thresholds $C_{vn}$ as the smallest value that satisfies $\P\left(\max_{k \le v} Z_{(k)} <  C_{vn} \right) = 1-\alpha.$
Where $C_{vn}$ depends on the sample through the order of the test statistics $Z_{(v)n}$.
Let $H_0$ denote the set of true null hypotheses and define the number of type 1 errors as $E_n = \sum_{v=1}^V I\left( Z_{(v)n} > C_{vn}, H_{(v)} \in H_{0} \right)$.

Then the PBJ procedure provides asymptotic control of the FWER at level $\alpha$, 
\[
\limsup_{n\to \infty} \P(E_n >0) \le \alpha.
\]
\end{theorem}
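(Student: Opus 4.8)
The plan is to follow the classical family-wise error argument for step-down maximum-statistic procedures \citep{dudoit_multiple_2008}, using the asymptotic null domination furnished by Theorem~\ref{thm:nulldom} to transfer a tail bound from the finite-sample law $Q_n$ of the statistics to the limiting law $Q_0$. It suffices to treat the case $H_0\neq\emptyset$, since otherwise $E_n\equiv 0$. Let $O_n$ be the data-dependent relabelling determined by $Z_{(v)n}=Z_{O_n(v),n}$, and set $c(H_0)$ to be the smallest $c$ with $\P_{Q_0}(\max_{m\in H_0}Z_m<c)=1-\alpha$; since $H_0$ and $Q_0$ are fixed, $c(H_0)$ is a deterministic constant.

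First I would make the step-down bookkeeping explicit. With the statistics ordered increasingly, the procedure rejects $H_{(v)}$ exactly when $Z_{(w)n}>C_{wn}$ for every $w\ge v$, so its rejected set is an upper block of positions in the ordering, and $E_n>0$ if and only if some true null belongs to that block. Let $v^\star=\max\{v:H_{(v)}\in H_0\}$ be the position of the true null whose statistic is largest. Then $E_n>0$ forces $Z_{(w)n}>C_{wn}$ for all $w\ge v^\star$, in particular $Z_{(v^\star)n}>C_{v^\star n}$; and, because no true null occupies a position above $v^\star$, every true-null index belongs to $O_n(\{1,\dots,v^\star\})$.

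The crux is the monotonicity of the cut-offs in the index set over which the maximum is taken. Since $O_n(\{1,\dots,v^\star\})\supseteq H_0$, for every realisation of the ordering we have $\max_{k\le v^\star}Z_{(k)}\ge\max_{m\in H_0}Z_m$ pointwise under $Q_0$, so the $(1-\alpha)$-quantile $C_{v^\star n}$ is at least $c(H_0)$. Combining with the previous step, on $\{E_n>0\}$ the true-null statistic in position $v^\star$ satisfies $\max_{m\in H_0}Z_{mn}\ge Z_{(v^\star)n}>C_{v^\star n}\ge c(H_0)$, so $\{E_n>0\}\subseteq\{\max_{m\in H_0}Z_{mn}>c(H_0)\}$. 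Taking $\P_{Q_n}$ of both sides and letting $n\to\infty$, Definition~\ref{def:and} (verified for the transformed $F$-statistics in Theorem~\ref{thm:nulldom}) yields $\limsup_n\P(E_n>0)\le\P_{Q_0}(\max_{m\in H_0}Z_m>c(H_0))=\alpha$, the last equality holding because $Q_0$ is continuous and $c(H_0)$ is its $(1-\alpha)$ maximum-quantile.

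The step I expect to be the main obstacle is exactly this step-down bookkeeping: making rigorous that at the highest position occupied by a true null the \emph{entire} set $H_0$ is still ``in play'', so that the active cut-off $C_{v^\star n}$ dominates the oracle cut-off $c(H_0)$ rather than only a cut-off built from some subset of $H_0$. The remaining ingredients --- monotonicity of quantiles under enlarging the index set, continuity of the singular Wishart $Q_0$, and the harmless replacement of $<$ by $\le$ in the quantile definition --- are routine. It is also worth noting that although $C_{v^\star n}$ is random through $O_n$, it is bounded below pathwise by the deterministic constant $c(H_0)$, which is what makes the single application of null domination legitimate; passing from these oracle cut-offs to the practical ones computed from $\hat Q_0=Q_0(\hat\Sigma)$ requires only the consistency $\hat\Sigma\to_P\Sigma$ together with continuity of $c(\cdot)$ in the correlation matrix, and would be argued separately.
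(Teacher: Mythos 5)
Your argument is correct, and it is essentially the proof the paper relies on: the paper does not prove Theorem~\ref{thm:aFWERc} itself but defers to \citet[p.~206]{dudoit_multiple_2008}, and your reduction --- locate the highest-ranked true null $v^\star$, note that the step-down rejection set is an upper block so a false rejection entails $Z_{(v^\star)n}>C_{v^\star n}$, bound the random cutoff $C_{v^\star n}$ below pathwise by the deterministic $c(H_0)$ because $O_n(\{1,\dots,v^\star\})\supseteq H_0$, and only then invoke asymptotic null domination (Definition~\ref{def:and}, supplied here by Theorem~\ref{thm:nulldom}) at the fixed threshold $c(H_0)$ --- is exactly that argument, including the key observation that domination need only be applied at a deterministic point. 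One caveat: you read $E_n$ as the number of true nulls among the hypotheses the step-down procedure actually rejects (i.e., with the chain condition $Z_{(w)n}>C_{wn}$ for all $w\ge v$), whereas the displayed formula for $E_n$ in the theorem omits that condition; your containment $\{E_n>0\}\subseteq\{\max_{m\in H_0}Z_{mn}>c(H_0)\}$ genuinely uses the chain condition at $v^\star$, and it fails for the literal formula, since a true null at a position $v<v^\star$ can exceed $C_{vn}$ while $Z_{(v^\star)n}\le C_{v^\star n}$, and $C_{vn}$ (a quantile of the maximum over $O_n(\{1,\dots,v\})$, which need not contain all of $H_0$) need not dominate $c(H_0)$. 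Your reading is the intended one --- it matches the cited source, and the literal formula cannot be controlled from null domination of the maximum over $H_0$ alone --- and your remark that the plug-in step for $\hat Q_0=Q_0(\hat\Sigma)$ is handled by a separate consistency argument likewise matches how the paper treats it.
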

The proof of Theorem \ref{thm:aFWERc} is given in \citet[p. 206]{dudoit_multiple_2008}. Propostion 5.5 of \citet[p. 211]{dudoit_multiple_2008} gives the adjusted \emph{p}-values used in Procedure \ref{proc:bootSD}.
The proof of asymptotic FWER control for Procedure \ref{proc:bootSS} is implied by Theorem \ref{thm:aFWERc}, because the single step procedure uses the single most conservative threshold, so leads to more conservative inference.

The joint distribution $Q_0$ is unavailable in practice and must be estimated from the sample.
Theorem 5.12 of \citet[p. 228]{dudoit_multiple_2008} states that if our estimate $\hat{Q}_{0}= Q_{0}(\hat \Sigma)$ converges in probability to $Q_0$ then the thresholds $C_{vn}$ in Theorem \ref{thm:aFWERc} are consistent, which gives consistent adjusted \emph{p}-values by the continuous mapping theorem.
Because our estimate $\hat \Sigma$ in \eqref{eq:sigmahat} is consistent, then $\hat{Q}_{0}$ converges in probability to $Q_0$. Thus, using this estimate of $\Sigma$ gives valid asymptotic inference using the PBJ.

\subsection{Supplementary simulation analyses}
We also ran the simulation analyses with 3mm and 9mm smoothing.
The parameter image for each simulation was smoothed with the same kernel as the imaging data.
The results are presented in Figures \ref{fig:sm3mm} and \ref{fig:sm9mm}.

\begin{knitrout}
\definecolor{shadecolor}{rgb}{0.969, 0.969, 0.969}\color{fgcolor}\begin{figure}[p!]

{\centering \includegraphics[width=0.8\maxwidth]{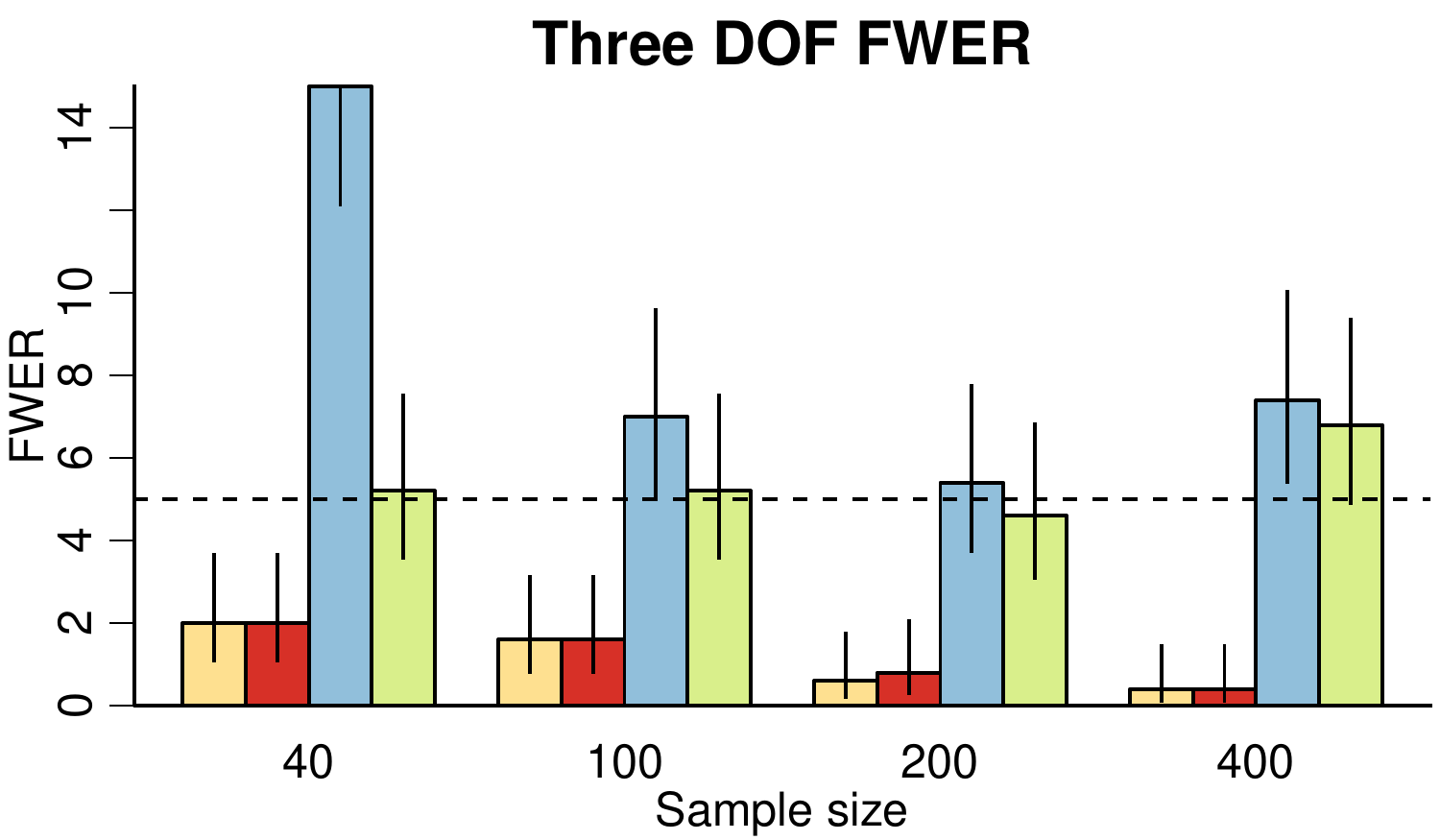} 
\includegraphics[width=0.8\maxwidth]{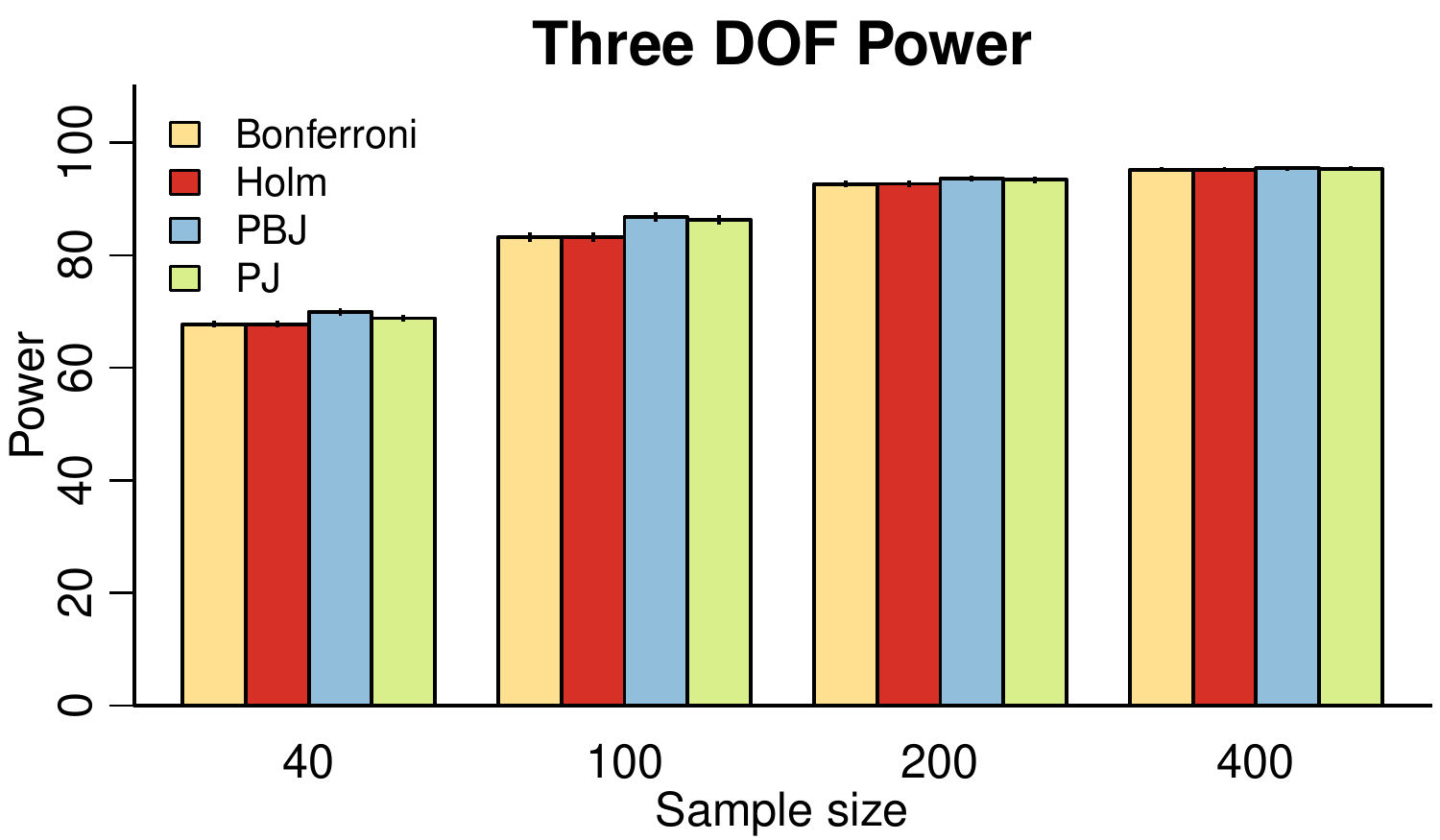} 

}

\caption{FWER and power for simulations with a Gaussian smoothing kernel of FWHM=3mm for the F-statistic of \eqref{eq:threedof} on three degrees of freedom (DOF) for the CBF image.}\label{fig:sm3mm}
\end{figure}

\end{knitrout}

\begin{knitrout}
\definecolor{shadecolor}{rgb}{0.969, 0.969, 0.969}\color{fgcolor}\begin{figure}[p!]

{\centering \includegraphics[width=0.8\maxwidth]{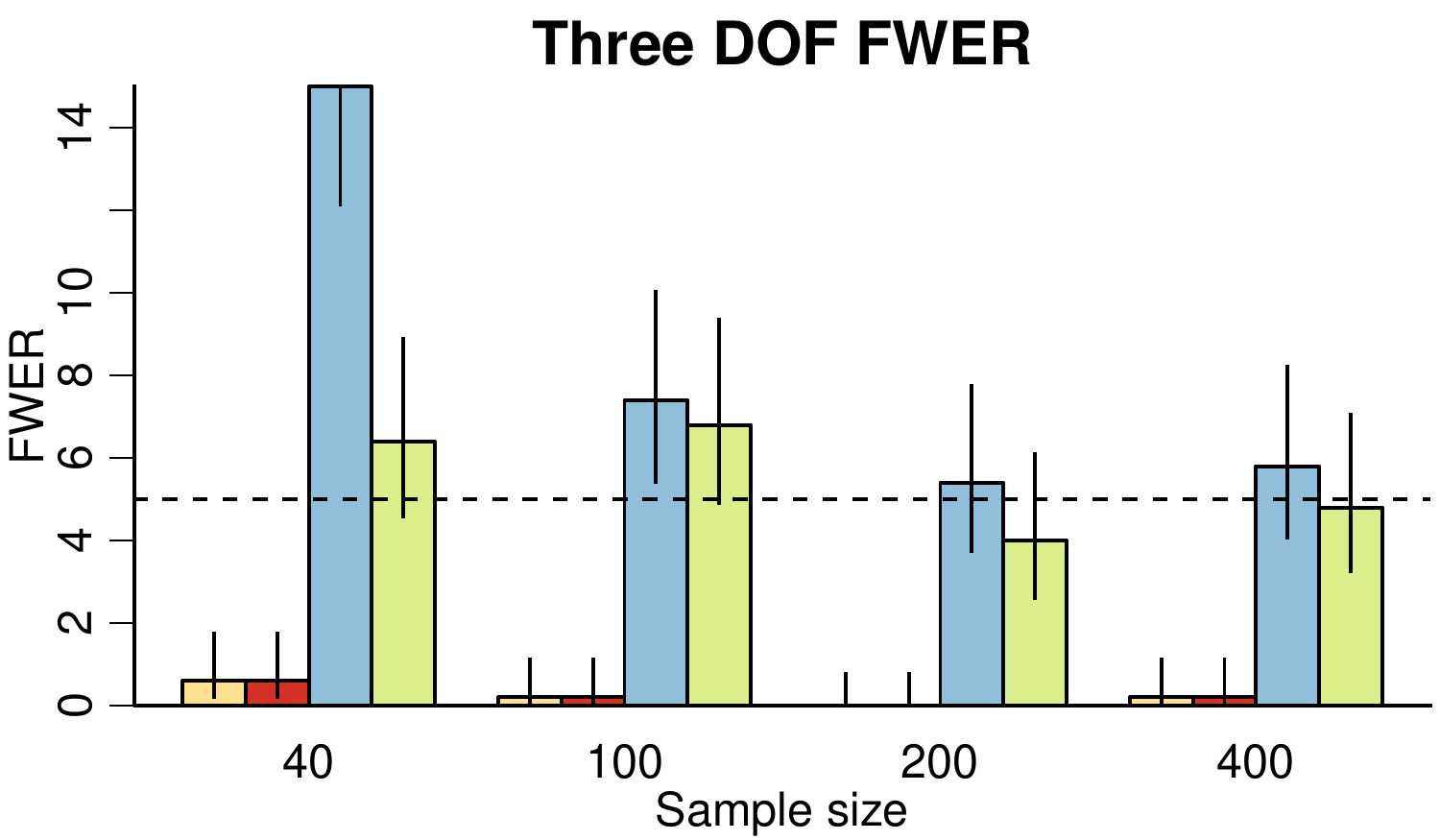} 
\includegraphics[width=0.8\maxwidth]{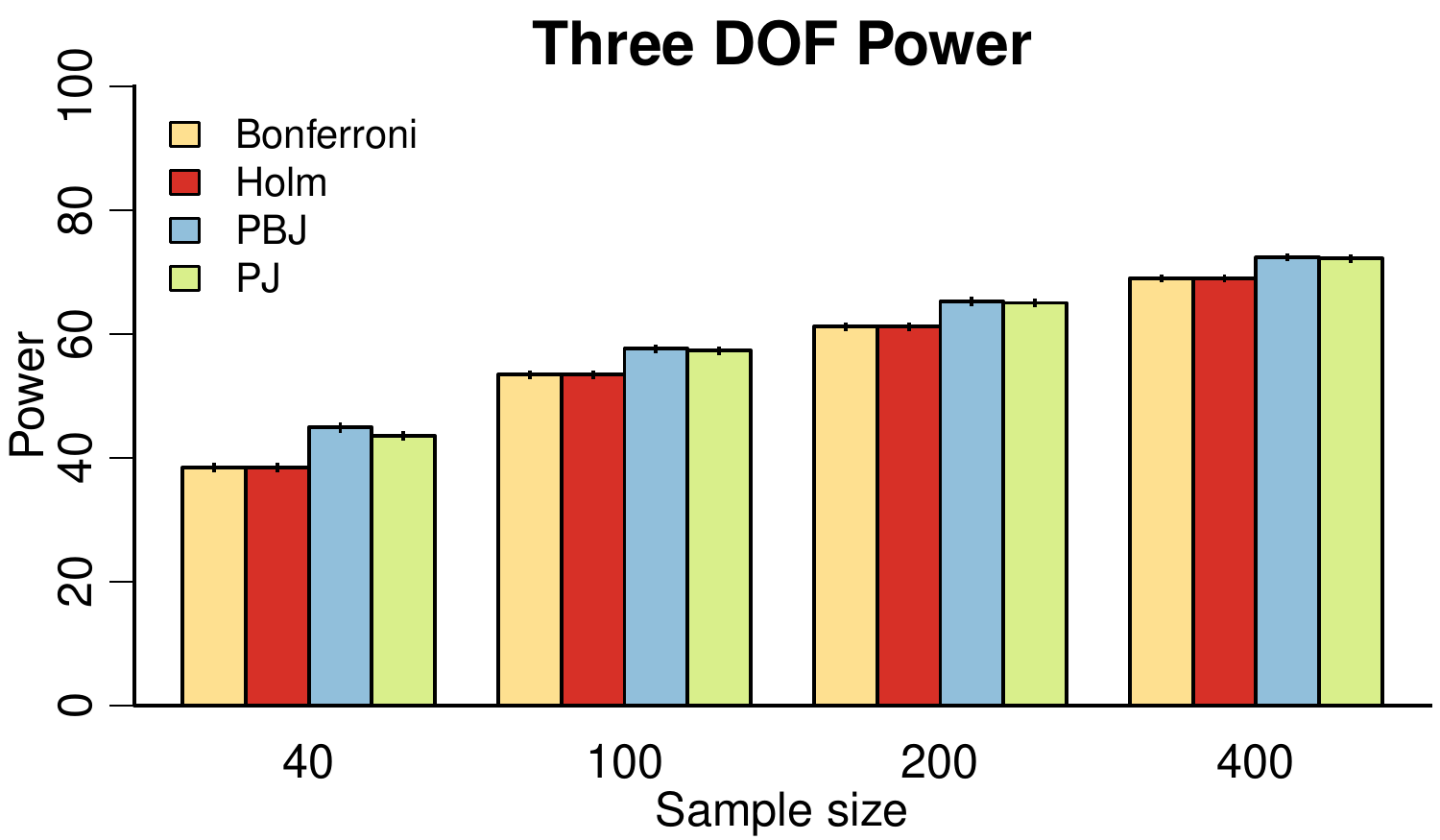} 

}

\caption{FWER and power for simulations with a Gaussian smoothing kernel of FWHM=9mm for the F-statistic of \eqref{eq:threedof} on three degrees of freedom (DOF) for the CBF image.}\label{fig:sm9mm}
\end{figure}

\end{knitrout}

Figure \ref{fig:CBFk10} shows FWER controlled results fitting a fixed degree spline model with 10 knots. We presented results with 5 knots in the manuscript.
The permutation procedure yielded no significant results because in the untransformed data there are voxels whose test statistics have heavily skewed null distributions.
When taking the maximum across the image this leads to very conservative results.
\begin{figure}
\centering\includegraphics[width=\maxwidth]{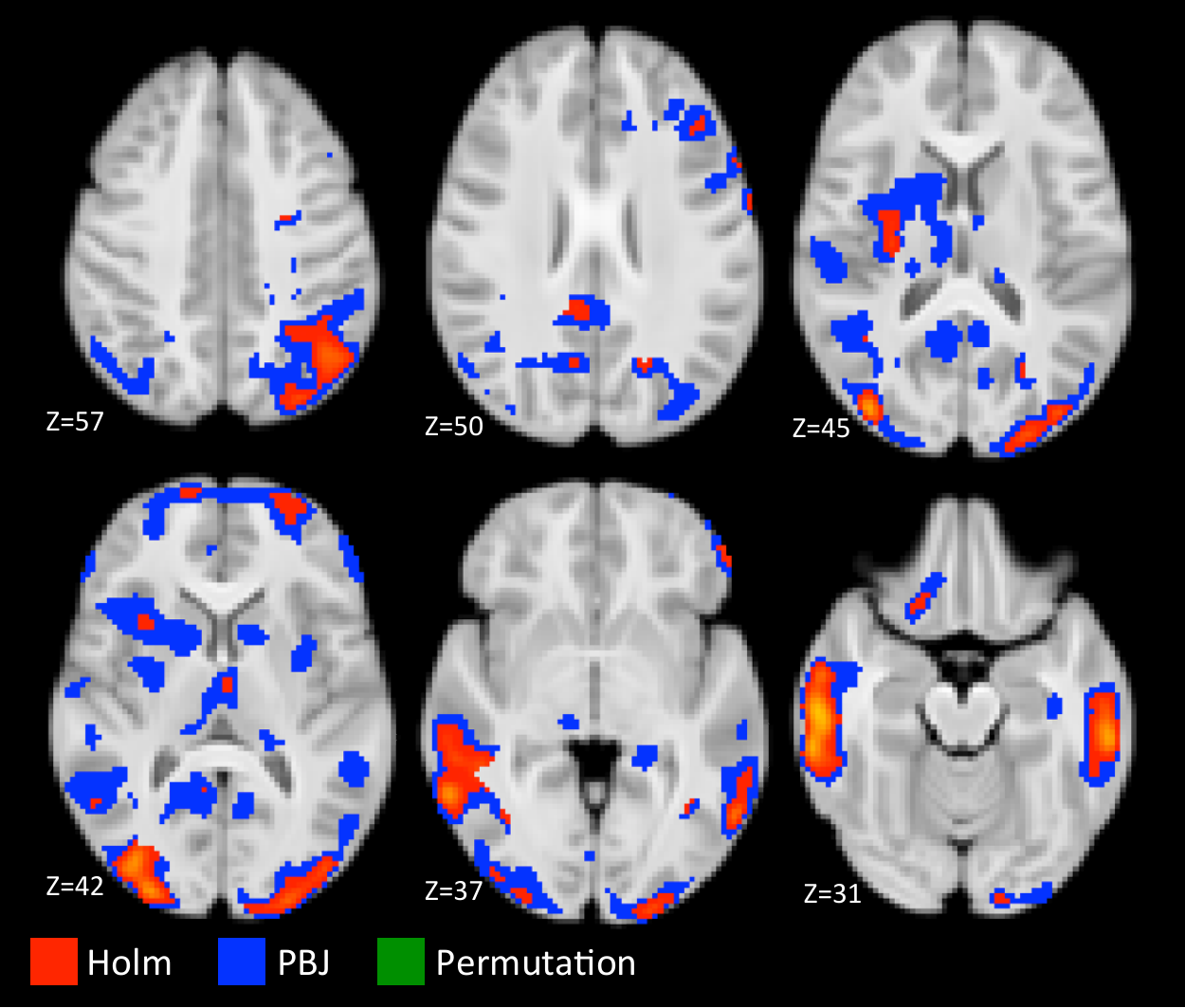} 
\caption{FWER controlled results at $\alpha=0.05$ for Holm (red), PBJ single-step (blue), and PJ single-step (green) for the spline model fit with 10 knots. Color scale is $-\log_{10}(p)$ for the adjusted \emph{p}-values and shows results greater than 1.3. The overlay order is the PBJ, Holm, and PJ procedures, so that green indicates regions where all three regions reject the null, red and green indicate regions where Holm and PJ reject, and the union of all colors is where PBJ rejects. Blue indicates locations where only the PBJ procedure rejects.}\label{fig:CBFk10}
\end{figure}

\end{document}